\documentclass[11pt]{article}
\usepackage[utf8]{inputenc}
\usepackage[margin=0.9657in]{geometry}
\usepackage[shortlabels]{enumitem}
\usepackage{bm}
\usepackage{bbm}
\usepackage{amsfonts}
\usepackage{amsmath}
\usepackage{xcolor}
\usepackage{graphicx}
\usepackage{amsthm}
\usepackage{qtree}
\usepackage{tree-dvips}
\usepackage{float}
\usepackage{authblk}
\usepackage{fontawesome5}
\usepackage{hyperref}

\newtheorem{theorem}{Theorem}
\newtheorem*{theorem*}{Theorem}
\newtheorem{lemma}{Lemma}
\newtheorem*{lemma*}{Lemma}
\newtheorem{definition}{Definition}
\newtheorem*{definition*}{Definition}
\newtheorem{corollary}{Corollary}
\newtheorem*{corollary*}{Corollary}

\newtheorem*{conjecture*}{Conjecture}

\newtheorem*{claim*}{Claim}

\newtheorem*{proposition*}{Proposition}

\newtheorem*{observation*}{Observation}

\newtheorem*{hypothesis*}{Hypothesis}

\usepackage[numbers]{natbib}

\newcommand{\B}{\{0,1\}}

\newcommand{\ber}{\mathrm{ber}}
\newcommand{\norm}[1]{\Vert #1 \Vert}
\newcommand{\harm}[1]{\bm{\breve}{#1}}
\def\cl{\bf}
\author{Oliver Korten \thanks{Institute for Advanced Study. Email: \href{mailto:korten@ias.edu}{korten@ias.edu}. This material is based upon work supported by the National Science Foundation under Grant No. DMS-2424441.}}
\date{}
\title{Top-Down Lower Bounds for All Depths}
\begin{document}
\maketitle

\begin{abstract}
We prove that Parity requires $2^{n^{\Omega(1)}}$ size De Morgan circuits of constant depth using a new method which is completely ``top-down'' in the sense of \cite{HJP1995}. The proof relies crucially on the core ideas developed in a line of work \cite{HJP1995,PPZ1999,MW2019,GRSS2024} which previously established top-down lower bounds for circuits of depth 3 and 4. We first present a proof of a lower bound $\exp(n^{3^{-d}})$. In this case, nearly all of the relevant combinatorial ideas necessary for the proof are already present in some form in \cite{GRSS2024}. 
We then present two extensions of this argument, the first achieving a lower bound $\exp(\epsilon_d n^{1/(2d-2)})$ for some $\epsilon_d>0$ depending only on $d$, and the second achieving the essentially tight lower bound $\exp(\epsilon_d n^{1/(d-1)})$. These improved results each hinge on establishing a key lemma which quantifies the extent to which a high entropy random variable in $\B^n$ will look close to uniform after projecting it onto a random small set of coordinates $R \subseteq [n]$.

\end{abstract}
\begin{quote}
\small
\textbf{AI Usage:} The result presented in Section~\ref{sec:original-proof}, giving the first top-down lower bounds for arbitrary depth $\cl{AC}^0$ circuits, was generated autonomously by GPT-6 Astra. This initial proof gave a lower bound of the form $\exp(n^{3^{-d}})$ for depth $d$ circuits. Subsequently, the author extended these methods to give a top-down proof of the stronger lower bound $\exp(n^{1/(2d-2)})$; this human-generated result is presented in Section~\ref{sec:new-lb}. The author then determined a combinatorial conjecture which would suffice to extend the arguments in Section~\ref{sec:new-lb} to the optimal $\exp(n^{1/(d-1)})$ bound, and asked the machine to prove it; the machine succeeded with some high-level human direction and the final tight lower bound is presented in Section~\ref{sec:nearly-tight}. A more detailed description of the AI methodology is given at the end of the introduction (\ref{subsec:roadmad-ai}).
All contents of this document were written solely by the human author, who takes full responsibility for their correctness.
\end{quote}

\section{Introduction}
A highly influential work of Karchmer and Wigderson \cite{KW1990} showed that lower bounds on the depth of circuits computing a function $f$ can be rephrased equivalently as lower bounds on the \emph{communication complexity} of a certain search problem associated with the function $f$ now called the ``KW game for $f$.'' A variant of this correspondence was shown to hold also for \emph{monotone computation}, and was immediately applied in \cite{KW1990} to obtain tight lower bounds on the depth of monotone circuits deciding undirected connectivity. Since \cite{KW1990}, a rich theory centered around \emph{lifting} has developed which supplies a very modular and general recipe for proving communication lower bounds for monotone KW games, and hence depth lower bounds for monotone circuits (the framework also yields strong lower bounds on monotone circuit size) \cite{RM1999,dRNV2016,PR2017,GP2018,GPW2018,GKRS2019,GGKS2020,dRMNPRV2020,LMMPZ2022,dRV2025}. In this framework, a lower bound is first proven for the monotone KW game in a simplified ``decision tree model,'' and then that simple lower bound is ``lifted'' to the more general communication-based model.

It has been a longstanding open problem to apply communication based arguments to obtain interesting lower bounds for \emph{non-monotone} circuits. In particular, it was pointed out in \cite{HJP1995} that at the time we did not even know how to reproduce \emph{known non-monotone lower bounds} in this framework. The authors of \cite{HJP1995} initiated the project of reproving the celebrated result of \cite{FSS1984,Ajtai1983} that $\mathrm{Parity} \notin \cl{AC^0}$ via communication-complexity arguments, which they refer to as ``top-down proofs'' (we will explain the meaning of ``top-down'' later on in the introduction). They proceeded to solve this problem in the particular case of depth 3 circuits. 
A few works have continued to study this question in the ensuing decades \cite{PPZ1999,MW2019,GRSS2024}; in particular, exponential lower bounds for depth 4 were obtained in \cite{GRSS2024}. The problem has remained open for any depth larger than 4. The current work completes this program, establishing exponential lower bounds for $\cl{AC^0}$ circuits of any fixed depth computing parity by a proof which is completely top-down in the sense intended in \cite{HJP1995}.

The value in proving an old result by a new method can be hard to judge on arrival. Our hope, and the hope of those who've previously studied this problem \cite{HJP1995,MW2019,GRSS2024}, is that such a proof will offer a bridge over which ideas from the highly-developed theory of monotone circuit complexity may start to be transferred over to the study of non-monotone circuit complexity, perhaps in the eventual service of making progress on the KRW conjecture \cite{KRW1995} and $\cl{P} \neq \cl{NC^1}$. Another motivation, highlighted in \cite{GRSS2024}, is the well-known problem of proving lower bounds for $\cl{PH^{cc}}$; this is a class extending $\cl{AC^0}$ for which random-restriction based arguments appear completely unavailable, and the authors in \cite{GRSS2024} argue that a top-down lower bound for $\cl{AC^0}$ may point us in the direction of such a lower bound.
Currently we have no clue as to whether the ideas in this work will play any role in future lower bounds of these kinds. At the very least, we believe the arguments here will offer some richer understanding of bounded depth circuits beyond what can be achieved by the method of random restrictions \cite{FSS1984,Ajtai1983,Yao1985,Hastad1986} or the method of polynomial approximation \cite{Razborov1987,Smolensky1987}.

\subsection{KW Games and Adversaries:}\label{subsec:adversary}
We start by defining Karchmer-Wigderson communication problems. For a pair of sets $X,Y \subseteq \B^n$, $X \cap Y = \emptyset$, the communication problem $\mathrm{KW}(X,Y)$ is defined as follows. Two parties Alice and Bob receive inputs $x \in X$ and $y \in Y$ respectively. Their goal is to determine some $i \in [n]$ such that $x_i \neq y_i$, by communicating with one another over the course of several rounds. We say that $\mathrm{KW}(X,Y)$ has a $d$-round protocol of cost $m$, if there is a deterministic protocol in which the players exchange messages for at most $d$ rounds, sending at most $m$ bits in every round, such that at the end of the protocol both parties agree on a valid output for $\mathrm{KW}(X,Y)$. In a given round, the next speaker is determined arbitrarily by the prior communication transcript. For a Boolean function $f$, we use the shorthand $\mathrm{KW}(f) := \mathrm{KW}(f^{-1}(0), f^{-1}(1))$.

\begin{lemma}[\cite{KW1990}]\label{lem:corresp}
A function $f: \B^n \to \B$ has a depth $\leq d$ unbounded fan-in De Morgan circuit in which each $\bigvee/\bigwedge$ node has $\leq 2^m$ children if and only if $\mathrm{KW}(f)$ has a $d$-round protocol of cost $m$.
\end{lemma}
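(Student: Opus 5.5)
We prove both directions by induction on $d$, and in fact prove the stronger statement for arbitrary disjoint pairs $X, Y \subseteq \B^n$ rather than only for $f^{-1}(0), f^{-1}(1)$. The strengthened claim reads: $\mathrm{KW}(X,Y)$ has a $d$-round protocol of cost $m$ if and only if there is a depth $\leq d$ De Morgan circuit $C$ with fan-in $\leq 2^m$ at every gate that separates the pair, meaning $C(x)=0$ for all $x\in X$ and $C(y)=1$ for all $y\in Y$. The lemma is then the case $X=f^{-1}(0)$, $Y=f^{-1}(1)$. In this correspondence a round in which Alice speaks matches an $\bigwedge$ gate, a round in which Bob speaks matches an $\bigvee$ gate, and the $2^m$ possible messages match the $\leq 2^m$ children. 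Depth $0$ corresponds to leaves, which are literals $z_i$ or $\neg z_i$; we do not count them as gates. For the base case, a $0$-round protocol must output a fixed $i$ valid for every $(x,y)$. Then either $x_i=0$ for all $x\in X$ and $y_i=1$ for all $y\in Y$, so the literal $z_i$ separates, or the reverse holds and $\neg z_i$ separates. Conversely, a literal separating the pair yields a constant output $i$.

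\textbf{Circuit $\Rightarrow$ protocol.} Suppose the top gate is $\bigwedge_{j\le 2^m} C_j$. Since $C(x)=0$ for $x\in X$, some $C_j(x)=0$. Alice sends the least such $j$, using $m$ bits. Since $C(y)=1$, we have $C_j(y)=1$. Let $X_j=\{x\in X: \text{Alice sends } j\}$. Then $C_j$ has depth $\le d-1$ and separates $(X_j,Y)$, so by induction the players finish with a $(d-1)$-round protocol of cost $m$. An $\bigvee$ top gate is symmetric, with Bob sending an index $j$ such that $C_j(y)=1$. The protocol uses $d$ rounds in total, each of cost $m$.

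\textbf{Protocol $\Rightarrow$ circuit.} This is the direction needing care. The next speaker is determined by the transcript so far, which before the first round is empty, so the first speaker is fixed. Suppose it is Alice. Her message is a function $a: X\to\B^m$. Let $X_j=a^{-1}(j)$ for $j\in\B^m$. For each $j$, the continuation of the protocol after message $j$ is a $(d-1)$-round protocol of cost $m$ for $\mathrm{KW}(X_j,Y)$, where the speaker of each later round is again determined by the transcript. By induction we obtain circuits $C_j$ of depth $\le d-1$ separating $(X_j,Y)$. For empty $X_j$ we simply omit $C_j$. Set $C=\bigwedge_j C_j$. For $y\in Y$, every $C_j(y)=1$, so $C(y)=1$. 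For $x\in X$, taking $j=a(x)$ gives $C_j(x)=0$, so $C(x)=0$. The top fan-in is at most $2^m$. Bob speaking first is dual and gives $\bigvee_j C_j$.

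The main subtlety, and the reason for strengthening the statement, is that sub-protocols act on rectangles $X_j\times Y$ rather than on a full $f^{-1}(0)\times f^{-1}(1)$. Induction over arbitrary disjoint pairs handles this. Two conventions must also be fixed. First, depth is measured as the number of gate layers, matching rounds. Second, a protocol that halts early, or a circuit whose leaves sit at different depths, is harmless, since padding is never needed: the depth and round counts are only bounded above.
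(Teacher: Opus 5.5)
Your proof is correct. It is the standard Karchmer--Wigderson simulation: the paper gives no proof of its own and simply cites \cite{KW1990}, and your induction over arbitrary disjoint pairs $(X,Y)$ formalizes exactly the recursive rectangle-decomposition picture the paper sketches right after the statement. One thing to tidy is that your base case silently uses $X,Y\neq\emptyset$ (e.g.\ $\mathrm{KW}(\B^n,\emptyset)$ has a vacuous $0$-round protocol but no separating literal), so either allow constant leaves or assume $f$ is nonconstant; nonemptiness is then preserved by your recursion because you discard empty parts.
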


As is well known, a communication protocol for a problem in which Alice receives inputs in $X$, and Bob in $Y$, corresponds to a recursive decomposition of $X \times Y$ into rectangles $X' \times Y'$, $X' \subseteq X, Y' \subseteq Y$ (see \cite{KN1997} for a comprehensive textbook on communication complexity). In a given round $i$ of a $d$-round protocol for $\mathrm{KW}(X,Y)$, the set of inputs consistent with the current communication transcript is a rectangle
$X' \times Y'$. If the next player to speak is Alice, her $m$-bit message performs an arbitrary partition of $X'$ into at most $2^m$ parts $X' = \bigcup_{j \leq 2^m} X'_j$. The remaining protocol can then be seen as giving, separately for each subrectangle $X'_j\times  Y'$, a $(d-i)$-round protocol of cost $m$ for $\mathrm{KW}(X'_j,Y')$. The existence of a zero-round protocol for some $X',Y'$ is equivalent to the existence of some $i \in [n]$ such that $x_i \neq y_i$ for all $x \in X', y \in Y'$.

With this perspective in mind, it is possible to prove that no $d$-round, cost $m$ protocol exists for $\mathrm{KW}(X,Y)$ by a kind of \emph{adversary argument}. Say that we define some combinatorial invariants $I_1,\ldots,I_{d+1}$ on rectangles in $X \times Y$, where invariant $I_{i}$ is meant to be satisfied prior to the $i^{th}$ round of communication, and $I_{d+1}$ to be satisfied after all communication is complete. If we can prove:
\begin{enumerate}
  \item (Base Case): The initial rectangle $X \times Y$ satisfies the initial round 1 invariant $I_1$.
  \item (Induction Step): If $X' \times Y'$ is any subrectangle satisfying the invariant $I_{i}$ for round $i$, and we partition one of the sides into at most $2^m$ parts e.g. $X' = \bigcup_{j \leq 2^m} X'_j$, then there exists some part $X'_j$ such that $X'_j \times Y'$ satisfies the invariant $I_{i+1}$ for round $i+1$.
  \item (Terminal Case): If $X' \times Y'$ satisfies the final invariant $I_{d+1}$, then for all $i \in [n]$ there exists $x \in X', y \in Y'$ such that $x_i = y_i$.
\end{enumerate}
then it follows that $\mathrm{KW}(X,Y)$ does not have a $d$-round protocol of cost $m$. We refer to this as an adversary argument, since we imagine the choice of part $X'_j$ in each inductive step as being selected by an adversary aiming to focus attention on the inputs in which the protocol fails to make sufficient progress.

\subsection{What Exactly Constitutes a ``Top-Down'' Proof?}
A ``top-down proof'' of a lower bound refers, informally speaking, to any communication lower bound for $\mathrm{KW}(f)$ which is proven by giving an \emph{explicit communication adversary} for $\mathrm{KW}(f)$. This is called top-down since tracing the evolution of a rectangle across the communication protocol corresponds precisely to tracing the set of inputs rejected/accepted by a gate in the circuit, starting from the output gate and following a downward path to some leaf computing a literal. This is in contrast to a ``bottom-up'' proof such as the random-restriction method, which eliminates gates of the circuit starting from the bottom and continuing upwards towards the output.

The substance of the term ``top-down proof'' rests entirely on what we mean by an ``explicit communication adversary.'' Indeed, the KW correspondence in Lemma~\ref{lem:corresp} is an exact equivalence, so \emph{any proof whatsoever} of $\mathrm{Parity} \notin \cl{AC^0}$ must in fact yield a lower bound on the communication complexity of $\mathrm{KW}(\mathrm{Parity})$, and hence implies the existence of \emph{some} adversary. 
However, if one takes any of the previously known proofs of $\mathrm{Parity} \notin \cl{AC^0}$ and works backwards through the correspondence, they will end up building an adversary whose next move is guaranteed only by a counterfactual, nonconstructive argument: assuming no next move for the adversary exists, we work backwards through the KW correspondence to deduce the existence of a circuit computing the function and contradict the known circuit lower bound. The reader may refer to \cite{KR2026} for an interesting analysis of such a ``non-explicit'' adversary arising from the approximation method \footnote{The work in \cite{KR2026} does not merely apply the KW correspondence generically to obtain an adversary in the way we describe, but proves that the approximation method yields an adversary with some particular features of interest. Nonetheless the adversary and its analysis remain non-constructive in roughly the same sense discussed here.}.
In contrast, our proof maintains a simple combinatorial invariant on the current rectangle, and directly supplies the adversary's next move based on this invariant.

\subsection{Notation:}
We introduce most notation as needed but start with some basic conventions that will be used throughout. All logarithms are base 2. For a finite coordinate set $I$ and $p\in [0,1]$, we use $P\subseteq_p I$ to denote a random $P \subseteq I$ which is sampled by including each $i \in I$ in $P$ independently with probability $p$. For a finite set $X$, we use $x \sim X$ to denote a uniformly random sample from $X$. For a random variable $\bm{X}$ with finite support, we use $H(\bm{X})$, $H_\infty(\bm{X})$ to refer to its Shannon entropy and min-entropy respectively. If $\bm{X}$ is supported on $\B^n$, we use $D_\infty(\bm{X}):= n - H_\infty(\bm{X})$ to refer to its \emph{min-entropy deficit}. For $X \subseteq \B^n$, we use $D_\infty(X):= n - \log |X|$; this is exactly the min-entropy deficit of the random variable $\bm{X}$ which is uniform on $X$, and hence we refer to this quantity also as the min-entropy deficit of the set $X$ itself.

\subsection{Roadmap and AI Methodology}\label{subsec:roadmad-ai}
In Sections \ref{sec:original-proof}, \ref{sec:new-lb}, and \ref{sec:nearly-tight}, we present top-down depth-$d$ circuit lower bounds of increasing quantitative strength: first $\exp(n^{3^{-d}})$, then $\exp(\epsilon_d n^{1/(2d-2)})$, and finally $\exp(\epsilon_d n^{1/(d-1)})$ which is tight up to the constant $\epsilon_d > 0$ (which in \cite{Hastad1986} can be replaced by an absolute constant $\epsilon > 0$). We describe here in a bit more detail the respective roles of the author and the machine assistant over the course of this work. 

Initially, the result in Section~\ref{sec:original-proof} was obtained completely autonomously by GPT-6 Astra. The author made some substantial effort to reorganize the proof in an intelligible way (the most substantive change being the introduction of the guiding function and the random variable $\bm{Y}^g$ to construct the mirror set, in place of a more inscrutable Markov argument in the original proof) but the underlying technical ingredients are essentially the same here as they appeared in the original machine-generated proof. In fact, the original proof found by Astra gave a lower bound of $\Omega_d(n^{1/F_d})$ where $F_d$ is the $d^{th}$ Fibonacci number ($F_d$ grows as $\Theta(\alpha^d)$ for some constant $\alpha \approx 1.6$); we chose to simplify the bound to $n^{3^{-d}}$ for the sake of a more straightforward exposition.

Subsequent to writing up the machine-generated results in Section~\ref{sec:original-proof}, the author found a way to extend these methods to obtain the improved lower bound in Section~\ref{sec:new-lb}; the introduction of the notion of $(p,k)$-limit, the reduction to the ``light patterns lemma,'' and the proof of the first form of the light patterns lemma via Fourier analysis were found by the author. GPT-6 Astra was used at various points to optimize parameters and simplify the presentation of the proof.

After the work in Section~\ref{sec:new-lb}, the author had determined that a sufficient strengthening of the ``light patterns lemma'' would be sufficient to obtain the near-optimal $\exp(\epsilon_d n^{1/(d-1)})$ lower bound using the same proof structure introduced by the author in Section~\ref{sec:new-lb}. With some high level steering by the author, the machine ultimately found a proof of this strengthened lemma. The author's contribution here was to suggest reusing the ``transference principle'' for downward closed sets (applied originally in Section~\ref{sec:original-proof} to prove the shattering lemma) and then to look for a more analytic analogue of the inductive proof of Pajor's lemma. Eventually the machine found a proof of the light patterns lemma which involved deriving a specialized reverse-hypercontractive inequality for decreasing functions on the $1/3$-biased hypercube. The author found this original proof to be conceptually opaque and worked for some time with the machine to simplify it, eventually arriving at the current argument in Section~\ref{sec:nearly-tight} involving the ``harmonic mean transform.'' 

\section{The Initial Lower Bound}\label{sec:original-proof}
In this section we present the initial proof generated by GPT-6 Astra of the lower bound $m \geq n^{3^{-d}}$ on the communication cost of $d$-round protocols for $\mathrm{KW}(X,Y)$ where $X,Y$ are the 0 and 1-inputs of the parity function on $n$ bits. 
\subsection{The Adversary}
\label{sec:game}
We start by describing the adversary and reduce its correctness to a single key combinatorial lemma (the ``mirror set lemma''). The following subsection is dedicated to a proof of the mirror set lemma.
 
\begin{theorem}
\label{thm:parity-game}
For any $n,d,m \in \mathbb{N}$, $\mathrm{KW}(\mathrm{Parity}_n)$ does not have a $d$-round communication protocol of cost $m$ when $m\leq  n^{3^{- d}}$, $n \geq 64^{3^d}$. Consequently any depth $d$ circuit computing $\mathrm{Parity}_n$ must have $> 2^{n^{3^{- d}}}$ wires whenever $n \geq 64^{3^d}$.
\end{theorem}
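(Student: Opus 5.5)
We prove this by the adversary argument of Section~\ref{subsec:adversary}. Throughout, $X=\mathrm{Parity}_n^{-1}(0)$ and $Y=\mathrm{Parity}_n^{-1}(1)$. The circuit consequence then follows from Lemma~\ref{lem:corresp}: a depth $d$ circuit with at most $2^{n^{3^{-d}}}$ wires has fan-in at most $2^m$ with $m\le n^{3^{-d}}$, and so yields a $d$-round protocol of cost $m$.

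\textbf{The invariant.} For round $i$ we take $I_i$ to say the following about a rectangle $X'\times Y'$. There is a set of \emph{live} coordinates $L_i\subseteq[n]$ with $|L_i|\ge n^{3^{-(i-1)}}$ (roughly), and the rectangle is ``mirrored'' on $L_i$. Concretely, both $X'$ and $Y'$ have small min-entropy deficit relative to the cube restricted to $L_i$, say $D_\infty \le m\cdot(\text{rounds so far})$ after fixing the coordinates outside $L_i$ in a common way. In addition, for every live coordinate $j$, many $x\in X'$ have a partner $y\in Y'$ that differs from $x$ only at $j$ (an edge of the cube across the cut). The base case is immediate: take $L_1=[n]$, with deficit $1$ on each side, and every edge of the cube is a parity edge.

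The terminal case needs only that $L_{d+1}\neq\emptyset$ together with the mirrored structure. For a coordinate $j\notin L_{d+1}$, every $x$ and $y$ agree at $j$ because those coordinates are fixed in common. For a live coordinate $j$, the high-entropy condition forces $X'$ to contain two points differing at $j$, so some $x\in X'$ matches some $y\in Y'$ at $j$.

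\textbf{Induction step.} Suppose Alice partitions $X'$ into $2^m$ parts. Some part $X'_j$ loses at most $m$ bits of entropy. We then need the key combinatorial lemma (the ``mirror set lemma''): if $X''$ has deficit $D$ on the live cube, then after passing to a random subset $R\subseteq_p L_i$ and fixing the coordinates outside $R$ appropriately, there is a restriction for which both the restricted $X''$ and a large \emph{mirror set} of $Y'$ are dense on $R$. The mirror set consists of points adjacent to $X''$ in direction $j\in R$ for many $j$.

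The proof of this lemma is where I expect the difficulty. I would proceed in two steps. First, pass to the downward closed set of patterns shattered by $X''$ (Pajor/Sauer--Shelah, the ``transference principle'' alluded to in the roadmap). Since $|X''|\ge 2^{n-D}$, the set $X''$ shatters at least $2^{n-D}$ subsets. Second, a random $p$-subset $R$ is then shattered with probability bounded below, roughly $2^{-D}$ times a factor depending on $p$. Then $X''$ restricted to $R$ is full, and its flips in each direction land in $Y$, which regenerates both the entropy condition and the edge condition on $R$. Bob's side is handled symmetrically using the mirror set built from $Y'$ via a guiding function.

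\textbf{Parameters.} Shattering a random set of size $s$ out of a set of deficit $D\approx m$ seems to require $s$ to be at most about the square root (or cube root) of the current live size divided by $m$. So each round costs a polynomial power. Taking $|L_{i+1}|\approx |L_i|^{1/3}$ gives $|L_{d+1}|\ge n^{3^{-d}}$, which is at least $m$, and the losses remain affordable provided $n\ge 64^{3^d}$. This last condition absorbs constant factors such as the factor of $4$ per round. The main obstacle is proving the shattering/mirror lemma with only polynomial loss while preserving the entropy of \emph{both} sides simultaneously.
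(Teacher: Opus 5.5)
There is a genuine gap, in exactly the step you flag as the obstacle. First, the scheme ``Alice partitions, keep the largest part $X''$, then regenerate'' cannot work from your invariant, because your connecting condition only says that \emph{many} $x\in X'$ have partners in $Y'$. For a counterexample, let $X'$ be all even strings of the live cube and $Y'$ the odd strings $y$ with $y_T\in B$. Here $T$ is a set of $2\le|T|\le m$ live coordinates and $B\subseteq\B^T$ has $|B|=2^{|T|-1}-1$. This rectangle satisfies your invariant: both deficits are at most $3$, and in every live direction at least a quarter of $X'$ has a partner in $Y'$. Now suppose Alice sends the bit $[x_T\in B]$. The larger part is $X''=\{x\in X': x_T\notin B\}$, and every pair in $X''\times Y'$ differs inside $T$. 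So two more rounds of cost $\le m$ finish the game: Alice announces $x_T$, and Bob names a coordinate of $T$ where $y$ differs. No lemma applied afterwards can rescue that choice.

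Second, your sketch of the regeneration lemma does not give its conclusion. Shattering says the \emph{projection} $\{x_R : x\in X''\}$ is all of $\B^R$, but different patterns are realized by points that disagree outside $R$. It therefore provides no common fixing $\alpha$ of the coordinates outside $R$ with $X''\cap[\alpha]_R$ large; indeed $X''\subseteq\mathrm{Parity}^{-1}(0)$ never contains a whole subcube $[\alpha]_R$ with $R\neq\emptyset$. Also, flipping a coordinate of a point of $X''$ lands in $\mathrm{Parity}^{-1}(1)$, not in the current $Y'$, which Bob's earlier messages have cut down. Nothing controls $Y'\cap[\alpha]_R$ for a restriction chosen for $X''$'s sake.

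The paper's fix is to make the connecting condition universal and to move it across \emph{before} a partition, never fixing coordinates. The invariant is $D_\infty(X_i),D_\infty(Y_i)\le k_i$ plus the $p_i$-limit condition on one side: \emph{every} $x$ on that side has $\Pr_{P\subseteq_{p_i}[n]}[[x]_P\cap Y_i\neq\emptyset]\ge 3/4$. Partitioning that side preserves this for free. Before the other side is partitioned, the Mirror Set Lemma replaces it by a subset, of deficit at most $k+2pn+2$, all of whose points are $q$-limits of the first side with $q=64kp$; only then is the largest part kept. In the example above, once $q|T|$ is small no $x$ with $x_T\notin B$ is a $q$-limit of $Y'$, so Alice's bit is constant on the mirror set.

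In the lemma's notation, the proof does three things. It steers into the \emph{current} $Y$ with a guiding function $g(x,P)\in[x]_P\cap Y$. It bounds $H_\infty(\bm{Y}^g)\ge n-k-2pn$ by counting preimages. And it uses shattering where it is exactly the right notion: for fixed $x$ and $P\cup Q=S$, some $y\in[x]_P$ has $[y]_Q\cap X=\emptyset$ iff $X\cap[x]_S$ fails to shatter $P\setminus Q$, because the return step may use arbitrary values on $Q$. The Shattering Lemma (failure probability $\le 4rD_\infty$, with $r\le p/q$) and the Entropy Lemma then yield the $q$-limit property. The choices $k_{i+1}=k_i^3$, $p_i=k_i/(4n)$ close the recursion for $m\le n^{3^{-d}}$.
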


The proof will maintain a certain invariant on the rectangle $X\times Y$ which intuitively says it is hard to tell the two sets apart by looking at a random small set of coordinates. We make a note here that in the generic description of a communication adversary given in Subsection~\ref{subsec:adversary}, we may without loss of generality allow the adversary to restrict a selected rectangle $X_i\times Y$ to some nonempty subrectangle $X'_i\times Y$, where $X'_i\subseteq X_i$ (and symmetrically on the $Y$ side), and maintain some invariant defined by that subrectangle. We will use this convention as a matter of notational convenience. The crux of the proof is to show that, if the current rectangle satisfies the invariant and the next protocol message has small cost, then the adversary can select a subrectangle which satisfies a quantitatively weaker version of the invariant. In the following, for $x \in \B^n$, $P \subseteq [n]$ we use $[x]_P$ to denote the subcube $\{ z \in \B^n \mid z_i= x_i \; \forall i \notin P\}$ and $x_P \in \B^P$ to denote the projection of $x$ onto the coordinates in $P$.

\begin{definition}[$p$-limit Condition]
Say that $x\in \B^n$ is a $p$-limit of $Y \subseteq \B^n$ if 
\[
\Pr_{P \subseteq_p [n]} [ Y\cap [x]_P \neq \emptyset] \geq \frac{3}{4}
\]
For sets $X,Y\subseteq\B^n$, we say that the rectangle $X\times Y$ satisfies the $p$-limit condition on the left (resp. right) if every element $x\in X$ is a $p$-limit of $Y$ (resp. every $y \in Y$ is a $p$-limit of $X$).
\end{definition}

We note that the threshold $\frac{3}{4}$ is somewhat arbitrary; changing the threshold to any constant in $(0,1)$ would suffice for the main result. The notion of a ``$p$-limit'' is a minor variant of a definition used in \cite{GRSS2024} (see Section 2.2, equation (2) of \cite{GRSS2024}; their definition involves a uniform random $k$-set rather than an independent Bernoulli set). This in turn is a variant of the ``$k$-limit'' used earlier in \cite{HJP1995} which requires the stated condition to hold \emph{for all sets} $P$ of a given size; the authors of \cite{HJP1995} credit Sipser \cite{Sipser1984} with the introduction of the concept.

The invariant we will maintain on the rectangle $X\times Y$ is that (1) both $X$ and $Y$ are suitably large and (2) the rectangle satisfies either the left or right $p$-limit condition (for some largeness thresholds and $p$ values which decay as the rounds progress). As mentioned above, the key step is to show that after each protocol message, the adversary can select a subrectangle satisfying the next invariant. We make a few observations. First note that it is trivial to maintain condition (1), losing a factor $2^{-m}$ in largeness at each step, by choosing the largest component of the decomposition. Second, note that if the $p$-limit condition is satisfied on the left (resp. right) and the next message partitions $X$ (resp. $Y$) then the $p$-limit condition is trivially maintained, since it involves a universal quantifier over elements of $X$ (resp. $Y$). Hence, if it were possible to show that the left $p$-limit condition implies the right $q$-limit condition (possibly after passing to subsets of $X,Y$) and vice versa for some reasonable value of $q$, we would be done. This is exactly what the following ``mirror set lemma'' will accomplish.

\begin{lemma}[Mirror Set Lemma]
\label{lem:mirror}
Suppose $D_\infty(X)\leq k$, where $k\ge1$, $p>0$, and the rectangle $X\times Y$ satisfies the $p$-limit condition on the left. If $q = 64kp \leq 1$ then there exists a ``mirror set'' $Y'\subseteq Y$ such that the rectangle $X\times Y'$ satisfies the $q$-limit condition on the right, and $D_\infty(Y')\leq k+2pn+ 2$.
\end{lemma}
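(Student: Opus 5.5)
The plan is to build $Y'$ in two stages. First I construct an auxiliary high-min-entropy random variable on $Y$ using a \emph{guiding function}. Then I take $Y'$ to be the set of its outcomes that are good $q$-limits of $X$, and show this set is large by a Markov argument. The combinatorial input will be a shattering statement for sets of small deficit.

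\textbf{Step 1 (guiding function).} For each pair $(x,P)$ with $x\in X$ and $Y\cap[x]_P\neq\emptyset$, fix some $g(x,P)\in Y\cap[x]_P$. Sample $x\sim X$ and $P\subseteq_p[n]$; by the left $p$-limit condition the function $g$ is defined with probability at least $3/4$. Let $\bm{Y}^g$ be $g(x,P)$ conditioned on being defined.

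\textbf{Step 2 (min-entropy of $\bm{Y}^g$).} If $g(x,P)=y$ then $x\in X\cap[y]_P$. Hence
\[
\Pr[\bm{Y}^g=y]\le \tfrac43\,\mathbb{E}_P\bigl[|[y]_P|\bigr]/|X|\le \tfrac43\,2^{k-n}(1+p)^n\le 2^{k-n+2pn+1/2}.
\]
So $D_\infty(\bm{Y}^g)\le k+2pn+1/2$. Any $Y'$ carrying at least half of the mass of $\bm{Y}^g$ then has $|Y'|\ge 2^{n-k-2pn-3/2}$, which gives the stated bound $k+2pn+2$.

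\textbf{Step 3 (the $q$-limit property on average).} I will aim to show
\[
\Pr_{x,P,\,Q\subseteq_q[n]}\bigl[X\cap[g(x,P)]_Q=\emptyset\bigr]\le \tfrac18 .
\]
Given this, Markov's inequality shows that the set $Y'$ of $y$ whose conditional failure probability is at most $1/4$ has $\bm{Y}^g$-mass at least $1/2$. Every $y\in Y'$ is then a $q$-limit of $X$, as required.

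To obtain the bound in Step 3, write $y=g(x,P)$. Then $y$ agrees with $x$ outside $P$, and $X\cap[y]_Q\neq\emptyset$ holds exactly when some $z\in X$ agrees with $y$ off $Q$. The idea is to exploit that $q=64kp$ is much larger than $p$, and to resample so that $P$ looks like a sparse subset of $Q$. Concretely, I couple $Q\supseteq Q_0$ with $P\subseteq Q_0$ on most coordinates, and treat the remaining $P\setminus Q$ as a set of density $\approx 1/(64k)$ relative to $P\cup Q$.

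Failure then requires the following: the pattern $y$ takes on the few coordinates of $P\setminus Q$ is not realized by any element of $X$ lying in the fiber of $x$ off $Q\cup P$. To control this I will use a \emph{shattering lemma}. A set $X$ with $D_\infty(X)\le k$ shatters at least $|X|$ subsets, by Pajor's lemma, and these form a downward-closed family of density $\ge 2^{-k}$. A transference principle for downward-closed families then says that a random set of density $\approx 1/(8k)$ lies in the family with probability $\ge 7/8$. This should be applied fiberwise, averaged over $x\sim X$: with probability $\ge 7/8$ the fiber of $x$ shatters $P\setminus Q$, so every pattern on those coordinates, in particular $y$'s, is realized and $X\cap[y]_Q\neq\emptyset$.

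\textbf{Main obstacle.} I expect the hardest part to be Step 3: making the coupling of $P$ inside $Q$ precise while keeping the guiding-function dependence of $y$ on $(x,P)$ under control. The issue is that $y$ is chosen adversarially within $[x]_P$, so the argument must work for every possible pattern on the uncovered coordinates. This is exactly why full shattering is needed, rather than mere hitting of one pattern. The transference from Pajor's counting bound to a random-set statement at density $\Theta(1/k)$ is where the constant $64$ should come from. It is the step I would set up first and verify carefully.
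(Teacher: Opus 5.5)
Your Steps 1--2 and the Markov reduction are essentially the paper's argument. The paper lets $g$ be arbitrary off $Y$ and pays 2 bits for conditioning on ``$\bm{Y}^g\in Y$ and $q$-limit'' rather than conditioning on $g$ being defined, but that difference is cosmetic. Your reduction of failure to the event that the fiber $X_{x,S}=X\cap[x]_S$ with $S=P\cup Q$ does not shatter $P\setminus Q$ is also exactly the paper's.

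\textbf{The gap: you never control the fiber's deficit.} You invoke Pajor's lemma at density $2^{-k}$ and then say it ``should be applied fiberwise.'' But $X_{x,S}$, viewed inside $\B^S$, does not inherit deficit $\le k$. For a particular $x$ it can be as small as $\{x\}$, which shatters only $\emptyset$, so no pointwise bound exists. Two ingredients are missing:
\begin{itemize}
\item The paper's Entropy Lemma: for each fixed $S$, $\mathbb{E}_{x\sim X}D_\infty(X_{x,S})\le D_\infty(X)$. This follows from the chain rule, since $|S|-H(\bm{X}_S\mid\bm{X}_{\overline S})=|S|-H(\bm{X})+H(\bm{X}_{\overline S})\le D_\infty(X)$.
\item A shattering estimate that is linear in the deficit, so it can be averaged over $x$: $\Pr_{R\subseteq_r S}[X_{x,S}\text{ does not shatter }R]\le 4rD_\infty(X_{x,S})$. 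You get this from the transference bound $(|\mathcal{A}|/2^{|S|})^{4r}\ge 2^{-4rD_\infty(X_{x,S})}$ followed by $1-2^{-t}\le t$.
\end{itemize}
A fixed-threshold statement of the form ``probability $\ge 7/8$ at density $1/(8k)$'' cannot be averaged this way. It is also numerically off: at relative density $1/(8k)$ and deficit $k$, transference only guarantees success probability $2^{-1/2}\approx 0.71$.

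\textbf{How the paper closes the argument.} The coupling you flagged as the main obstacle is clean. Condition on $P\cup Q=S$. Then, independently of $x$, the set $P\setminus Q$ is distributed as $R\subseteq_r S$ with $r=p(1-q)/(p+q-pq)\le p/q\le 1/(64k)$. Averaging $4rD_\infty(X_{x,S})$ over $x$ and $S$ via the Entropy Lemma bounds the failure probability by $4rk\le 1/16$. This is where the constant $64$ comes from. The adversarial dependence of $y$ on $(x,P)$ through the guiding function is not an obstacle either: requiring full shattering handles every $y\in[x]_P$ at once, as you yourself observed.
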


Note that the statement of this lemma immediately implies the symmetric form in which we swap the roles of $X,Y$, and so we use both forms freely.
We are now ready to prove Theorem~\ref{thm:parity-game}:
\begin{proof}[Proof of Theorem~\ref{thm:parity-game}]
Let $X_0 = \mathrm{Parity}^{-1}(0)$, $Y_0 = \mathrm{Parity}^{-1}(1)$. For $i \in\{ 0,\ldots, d\}$ let $k_i = n^{3^{i - d}}$,  $p_i = \frac{k_i}{4n}$. We observe the following:
\[
64 \leq k_0 \leq k_i \leq n \text{ for all } 0 \leq i \leq d, \quad m \leq k_0, \quad k_{i+1} = k_i^3 \text{ for all } 0 \leq i < d
\]
and from these it follows that, for all $0 \leq i < d$
\[
1.\; \;\; k_i + 2p_i n + 2 + m \leq 3 k_i + 2 \leq k_{i+1}, \quad\quad 2. \; \; \; 64 k_i p_i  \leq \frac{k_i^3}{4n} \leq p_{i+1} \leq \frac{1}{4}
\]

After $i$ rounds we maintain a rectangle $X_i\times Y_i$ such that the $p_i$-limit condition is satisfied (either on the left or right) and $D_{\infty}(X_i), D_{\infty}(Y_i) \leq k_i$. At the start, $k_0 = n^{3^{-d}} \geq 1 = D_{\infty}(X_0)= D_{\infty}(Y_0)$ so the deficit condition is satisfied. On the other hand it may be verified that for our parameters, a $p_0$-random set $P$ has $|P| > 0$ with probability $\geq \frac{3}{4}$; this immediately implies that the $p_0$-limit property is satisfied on both the left and right by the rectangle $X_0\times Y_0$ since for any string $z$ and any $|P|> 0$, $[z]_P$ will contain both even and odd parity strings. At the end we have $p_d  \leq \frac{1}{4}$ and $k_d \leq n$, which implies that the final rectangle $X_d\times Y_d$ does not admit a zero-round protocol: in this case the $p_d$-limit is satisfied on the left or right (say left) and both $X_d, Y_d$ are nonempty. For every $i$, a $p_d$-random set will contain $i$ with probability $\leq \frac{1}{4}$, and hence for any $x \in X_d$ and every $i \leq n$ we can find some $y \in Y_d$ agreeing with $x$ at index $i$.

It remains to show that if the rectangle $X_i\times Y_i$ satisfies the key invariant then we can reestablish this for $X_{i+1}\times Y_{i+1}$. Suppose the next message partitions $X_i = X_{i, 1} \cup \cdots \cup X_{i, 2^m}$ (the other case is handled symmetrically). If the $p_i$-limit condition holds for $X_i\times Y_i$ on the left then it trivially holds for every subrectangle $X_{i,j}\times Y_i$ (and hence so does the $p_{i+1}$-limit condition since $p_{i+1} \geq p_i$); we may take $j$ maximizing $|X_{i,j}|$ and set $X_{i+1} = X_{i,j}, Y_{i+1} = Y_i$. We have $D_{\infty}(X_{i+1}) \leq D_{\infty}(X_i) + m \leq k_{i+1}$ and $D_{\infty}(Y_{i+1}) = D_{\infty}(Y_i) \leq k_{i+1}$ so we are done. In the interesting case, the $p_i$-limit condition holds on the right, and we apply the mirror set lemma. We pass to some $X'_i \subseteq X_i$ so that $D_\infty(X'_i) \leq k_i + 2p_i n + 2$ and the rectangle $X'_i\times Y_i$ satisfies the $64 k_i p_i$-limit condition on the left (we are able to apply this lemma since $64 k_i p_i\leq \frac{1}{4}$ by (2)). We then consider the decomposition $X'_i = X'_{i,1} \cup \cdots \cup X'_{i,2^m}$ with $X'_{i,j} = X'_i \cap X_{i,j}$ and set $X_{i+1}$ to equal the largest component (and $Y_{i+1} = Y_i$). The deficit of $X_{i+1}$ is bounded by $k_i + 2p_i n + 2 + m $ and the rectangle $X_{i+1}\times Y_{i+1}$ satisfies the $64 k_i p_i$-limit condition. Since $k_{i+1} \geq k_i + 2p_i n + 2 +  m $ and $p_{i+1} \geq 64 k_i p_i$ by (1)/(2) we are done.
\end{proof}
It is instructive to compare the evolution of the random coordinate sets in this proof with that of the free coordinate sets in the iterative random restriction proof of H{\aa}stad \cite{Hastad1986}. The two arguments traverse the circuit in opposite directions. Here we need to choose random coordinate sets of expected size $\Omega(m)$, and allow their expected size to grow cubically at each stage, ending at $n/4$. Read backwards, the size of the random sets starts close to $n$ and shrinks cubically at each step. In the random-restriction proof, by contrast, each random restriction shrinks the number of free coordinates by a factor $\Theta(m)$. This more favorable parameter evolution yields the tight communication-cost lower bound $m = \Omega(n^{1/(d-1)})$; we will be able to (almost) match this parameter sequence later on in Sections~\ref{sec:new-lb} (growing the sets by a factor $\Theta_d(m^2)$) and \ref{sec:nearly-tight} (growing them by a factor $\Theta_d(m)$).

\subsection{Proof of the Mirror Set Lemma}
To prove the mirror set lemma we need two combinatorial lemmas which occur either directly, or in a slightly altered form, in \cite{GRSS2024}. The first appears roughly as Lemma 9 in \cite{GRSS2024}. We give a proof here which is noticeably simpler and yields a stronger bound. Lemma 9 in \cite{GRSS2024} is also sufficient to obtain the main result in this section, and indeed the original machine-generated proof used Lemma 9 of \cite{GRSS2024} as a black-box rather than our Lemma~\ref{lem:shattering} below. In the following, we say that a set $X\subseteq \B^n$ ``shatters'' $R \subseteq [n]$ if $\{ x_R \mid x \in X\} = \B^R$. We require a well-known result of Pajor which strengthens the classical Sauer-Shelah lemma \cite{Sauer1972,Shelah1972}:
\begin{lemma*}[Pajor's Lemma \cite{Pajor1985}]
Any $X\subseteq \B^n$ shatters at least $|X|$ distinct sets.
\end{lemma*}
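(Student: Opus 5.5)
We would prove Pajor's Lemma by induction on $n$, tracking the family of shattered sets. For $A \subseteq \B^n$ write $\mathcal{S}(A)$ for the family of sets $R \subseteq [n]$ shattered by $A$. The claim is $|\mathcal{S}(A)| \geq |A|$.

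\emph{Base case and conventions.} If $A = \emptyset$ the claim is trivial, since $0 \geq 0$. If $A \neq \emptyset$, then $A$ shatters $R = \emptyset$, because $\B^{\emptyset}$ has a single element. For $n = 0$ we have $|A| \leq 1$, so the claim holds.

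\emph{Inductive step.} Given $X \subseteq \B^n$ with $n \geq 1$, split according to the last coordinate:
\[
X_b = \{ x_{[n-1]} \mid x \in X,\ x_n = b \} \subseteq \B^{n-1}, \qquad b \in \B,
\]
so that $|X| = |X_0| + |X_1|$. Write $\mathcal{S}_0 = \mathcal{S}(X_0)$ and $\mathcal{S}_1 = \mathcal{S}(X_1)$, both viewed as families of subsets of $[n-1]$. We make two observations.
\begin{enumerate}
  \item Every $R \in \mathcal{S}_0 \cup \mathcal{S}_1$ is shattered by $X$ itself. Indeed $R \subseteq [n-1]$, and the projections of $X$ onto $R$ contain those of $X_0$ (respectively $X_1$).
  \item If $R \in \mathcal{S}_0 \cap \mathcal{S}_1$, then $X$ shatters $R \cup \{n\}$. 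Any pattern $(u, b) \in \B^R \times \B$ is realized by lifting a string of $X_b$ that realizes $u$ on $R$.
\end{enumerate}
The sets of the second kind contain $n$ and those of the first kind do not, so the two families are disjoint subfamilies of $\mathcal{S}(X)$. Therefore
\[
|\mathcal{S}(X)| \geq |\mathcal{S}_0 \cup \mathcal{S}_1| + |\mathcal{S}_0 \cap \mathcal{S}_1| = |\mathcal{S}_0| + |\mathcal{S}_1| \geq |X_0| + |X_1| = |X|,
\]
where the last inequality is the induction hypothesis applied in dimension $n-1$.

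There is no real obstacle in this argument. The only point needing care is the second observation, namely that the intersection $\mathcal{S}_0 \cap \mathcal{S}_1$ contributes new shattered sets containing $n$. This is exactly what makes the inclusion–exclusion identity $|\mathcal{S}_0 \cup \mathcal{S}_1| + |\mathcal{S}_0 \cap \mathcal{S}_1| = |\mathcal{S}_0| + |\mathcal{S}_1|$ close the induction.
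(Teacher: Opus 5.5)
Your argument is correct and complete. It is the standard induction on $n$: split $X$ by the last coordinate, then use $|\mathcal{S}_0\cup\mathcal{S}_1|+|\mathcal{S}_0\cap\mathcal{S}_1|=|\mathcal{S}_0|+|\mathcal{S}_1|$ to close the induction.

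The paper does not prove Pajor's lemma itself; it cites \cite{Pajor1985} and uses the lemma as a black box in Lemma~\ref{lem:shattering}. Your induction is, however, the ``inductive proof of Pajor's lemma'' that the paper says Lemma~\ref{lem:analytic} is an analytic analogue of. That proof also splits on the last coordinate, and the pieces correspond as follows:
\begin{itemize}
\item The transform at $S$ is bounded below by an arithmetic mean of the two halves, positive if either half is. It plays the role of your $\mathcal{S}_0\cup\mathcal{S}_1$.
\item The transform at $S\cup\{n\}$ is a harmonic mean, positive only if both halves are. It plays the role of your $\mathcal{S}_0\cap\mathcal{S}_1$.
\item The one-variable inequality in its base case plays the role of your inclusion--exclusion identity.
\end{itemize}
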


\begin{lemma}[Shattering Lemma]
\label{lem:shattering}
If $X\neq \emptyset$ and $0\leq r\leq1/2$, then
\begin{equation*}
\Pr_{R\subseteq_r [n]}[X\text{ does not shatter }R]
\le 4rD_\infty(X).
\end{equation*}
\end{lemma}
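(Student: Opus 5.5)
The plan is to combine Pajor's Lemma with a ``transference'' inequality for downward closed families. It converts a lower bound on the uniform measure of the shattered family into a lower bound on its $r$-biased measure.

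\textbf{Step 1 (reduction to a downward closed family).} Let $\mathcal{S}\subseteq 2^{[n]}$ be the family of sets shattered by $X$. If $X$ shatters $R$, it shatters every subset of $R$, so $\mathcal{S}$ is downward closed. By Pajor's Lemma, $|\mathcal{S}|\ge |X| = 2^{n-D_\infty(X)}$. Writing $\mu_p$ for the $p$-biased product measure on $2^{[n]}$, this says $\mu_{1/2}(\mathcal{S})\ge 2^{-D}$ with $D=D_\infty(X)$. The event in the lemma is exactly $R\notin\mathcal{S}$. So it suffices to show that $\mu_r(\mathcal{S})\ge 1-4rD$ for $r\le 1/2$.

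\textbf{Step 2 (transference inequality).} I would prove the following. For any downward closed $\mathcal{S}$ and any $0\le q\le p<1$,
\[
\mu_q(\mathcal{S})\ \ge\ \mu_p(\mathcal{S})^{\alpha},\qquad \alpha=\frac{\log(1-q)}{\log(1-p)}\in[0,1].
\]
This is tight for $\mathcal{S}=\{R: R\cap T=\emptyset\}$.

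The proof is by induction on $n$. Split $\mathcal{S}$ according to whether $n\in R$: set $\mathcal{S}_0=\{A\subseteq[n-1]: A\in\mathcal{S}\}$ and $\mathcal{S}_1=\{A\subseteq[n-1]: A\cup\{n\}\in\mathcal{S}\}$. Both are downward closed and $\mathcal{S}_1\subseteq\mathcal{S}_0$. Let $a=\mu_p(\mathcal{S}_0)$ and $b=\mu_p(\mathcal{S}_1)$, so $b\le a$. By induction the $q$-measures of the two halves are at least $a^\alpha$ and $b^\alpha$. Using $1-q=(1-p)^\alpha$, it remains to show
\[
(1-p)^\alpha a^\alpha + \bigl(1-(1-p)^\alpha\bigr) b^\alpha \ \ge\ \bigl((1-p)a+pb\bigr)^\alpha .
\]
Divide by $a^\alpha$, put $u=1-p$, $t=b/a\in[0,1]$, and substitute $s=t^\alpha$, $\beta=1/\alpha\ge1$. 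The left side becomes the linear function $u^\alpha+(1-u^\alpha)s$. The right side becomes $\bigl(u+(1-u)s^\beta\bigr)^{1/\beta}$, which is a weighted $\ell_\beta$-norm of the vector $(1,s)$ and hence convex in $s\ge0$. The two sides agree at $s=0$ (both equal $u^\alpha$) and at $s=1$ (both equal $1$). A convex function lies below its chord, so the inequality holds on $[0,1]$.

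I expect this one-step inequality to be the main obstacle. The convexity observation is what I would try first, and it appears to close the step cleanly. The base case $n=0$ is trivial.

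\textbf{Step 3 (parameters).} Apply Step 2 with $p=1/2$ and $q=r\le 1/2$. Then $\alpha=\log\frac{1}{1-r}$ (base 2), which is at most $2r$ on $[0,1/2]$ by convexity of $r\mapsto\log\frac1{1-r}$ and equality at $r=0,1/2$. Hence
\[
\mu_r(\mathcal{S})\ge 2^{-D\alpha}\ge 1-(\ln 2)D\alpha\ge 1-2rD .
\]
This gives $\Pr_{R\subseteq_r[n]}[X\text{ does not shatter }R]\le 2rD\le 4rD$, as required. Here $X\ne\emptyset$ is used so that $D<\infty$.
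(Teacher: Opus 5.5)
Your proof is correct. Both arguments begin the same way, by applying Pajor's Lemma to the downward-closed family of shattered sets; you depart from the paper only in the transference step.

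\textbf{The paper's route.} It gets $\mu_r(\mathcal{A})\ge\mu_{1/2}(\mathcal{A})^{4r}$ by a coupling. The union of $h=\lfloor 1/(2r)\rfloor$ independent $r$-random sets is $\theta$-random with $\theta\le 1/2$. Monotonicity and downward closure then give $\mu_{1/2}(\mathcal{A})\le\mu_\theta(\mathcal{A})\le\mu_r(\mathcal{A})^h$, and $1/h\le 4r$. This is short and needs no induction, but the rounding in $h$ costs a constant factor.

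\textbf{Your route.} You prove by induction on $n$ the sharp inequality $\mu_q(\mathcal{S})\ge\mu_p(\mathcal{S})^{\alpha}$ with $\alpha=\log(1-q)/\log(1-p)$, for all $0<q\le p<1$. The one-coordinate step is right: the left side is exactly the chord of the convex map $s\mapsto(u+(1-u)s^\beta)^{1/\beta}$ on $[0,1]$, and $\mathcal{S}_1\subseteq\mathcal{S}_0$ keeps $s\le 1$. The estimate $\log_2\frac{1}{1-r}\le 2r$ on $[0,1/2]$ is also right.

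\textbf{What your route buys.} The exponent is optimal, since it is tight for families $\{R: R\cap T=\emptyset\}$. This gives the bound $2\ln 2\cdot rD_\infty(X)$ instead of $4rD_\infty(X)$. It also works for any base bias $p$: with $p=1/4$ it would improve the exponent $5r$ in the transfer principle of Section~\ref{sec:nearly-tight} to below $3.6r$. Its shape, an induction over coordinates plus a two-point inequality, parallels the proof of Lemma~\ref{lem:analytic}.

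\textbf{One small repair.} Your induction needs $q>0$. At $q=0$, $\beta=1/\alpha$ is undefined, and the claim is false for the empty family under the convention $0^0=1$; the empty family can arise as $\mathcal{S}_1$. So handle $r=0$ separately. That case is trivial, because $X\neq\emptyset$ shatters $\emptyset$.
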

\begin{proof}
Clearly we may assume $r> 0$. Below we reproduce a proof of a standard inequality \cite[Lemma~4.3.7]{Zhao2024}: for any downward-closed family $\mathcal{A} \subseteq 2^{[n]}$, $\Pr_{R \subseteq_r [n]}[R \in \mathcal{A}] \geq (\frac{|\mathcal{A}|}{2^n})^{4r}$. Apply this to the family $\mathcal{A}$ of sets shattered by $X$; by Pajor's Lemma \cite{Pajor1985}, we have $|\mathcal{A}| \geq |X|$, hence:
\[
\Pr_{R}[R \notin \mathcal{A}] \leq 1 - \Bigl(\frac{|\mathcal{A}|}{2^n}\Bigr)^{4r}\leq 1 - \Bigl(\frac{|X|}{2^n}\Bigr)^{4r} \leq 1 - 2^{- 4r D_{\infty}(X)} \leq 4r D_{\infty}(X)
\]
where we apply the bound $1 - 2^{-t} \leq t$ valid for all $t \geq 0$. It remains to prove the general inequality for downward-closed families. Let $h = \lfloor \frac{1}{2r} \rfloor$, so $h \geq \frac{1}{4r}$, and sample $h$ independent sets $R_1,\ldots,R_h \subseteq_r [n]$. Their union has the same distribution as a sample $A\subseteq_\theta[n]$, where $\theta:=1-(1-r)^h \leq hr \leq  \frac{1}{2}$. Note that the quantity $\Pr_{A \subseteq_\alpha [n]}[A \in \mathcal{A}]$ is decreasing in $\alpha$ since $\mathcal{A}$ is downward closed. We thus have:
\begin{gather*}
\frac{|\mathcal{A}|}{2^n}= \Pr_{A \subseteq_{\frac{1}{2}} [n]}[ A\in \mathcal{A}] \leq \Pr_{A \subseteq_\theta [n]}[ A\in \mathcal{A}] = \Pr[ \bigcup_i R_i \in \mathcal{A}] \leq \Pr[\bigwedge_i (R_i \in \mathcal{A})] = \Pr[R_1 \in \mathcal{A}]^h
\end{gather*}
\end{proof}

We also need the following well-known fact about min-entropy, whose proof can be found in \cite{GRSS2024} as a subclaim in the proof of Lemma 6. 
This ``entropy lemma'' says that if we condition on the value a random variable in $\B^n$ takes on the coordinates outside a set $S \subseteq [n]$, the entropy deficit on the remaining coordinates in $S$ will be at most the entropy deficit of the original random variable (on average).
\begin{lemma}[Entropy Lemma]
\label{lem:entropy}
For $S\subseteq[n]$ and $x \in X \subseteq \B^n$, let $X_{x,S}=X\cap[x]_S$, viewed inside its $|S|$-dimensional cube (in particular its min-entropy deficit is defined with respect to this subcube). Then
\begin{equation*}
\mathbb E_{x \sim X} D_\infty(X_{x,S})\le D_\infty(X).
\end{equation*}
\end{lemma}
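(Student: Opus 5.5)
We prove the Entropy Lemma by grouping $X$ according to the value of $x$ outside $S$, and then using concavity of $\log$ (Jensen's inequality).

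\textbf{Grouping.} For $u\in\B^{[n]\setminus S}$ let $X_u=\{x\in X : x_{[n]\setminus S}=u\}$. For every $x\in X_u$ we have $X_{x,S}=X_u$, viewed as a subset of the cube $\B^S$. Hence
\[
D_\infty(X_{x,S}) = |S| - \log|X_u| .
\]
A uniform $x\sim X$ falls in block $u$ with probability $|X_u|/|X|$. Let $U$ be the set of $u$ with $X_u\neq\emptyset$. Then
\[
\mathbb E_{x\sim X} D_\infty(X_{x,S}) = |S| - \sum_{u\in U}\frac{|X_u|}{|X|}\log|X_u| .
\]

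\textbf{Rewriting the sum.} Write $\log|X_u| = \log|X| - \log\frac{|X|}{|X_u|}$. The weights $\pi_u=|X_u|/|X|$ sum to $1$, so
\[
\sum_u \pi_u\log|X_u| = \log|X| - \sum_u\pi_u\log(1/\pi_u) = \log|X| - H(\pi).
\]
Here $H(\pi)$ is the Shannon entropy of the distribution $\pi$ on $U$.

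\textbf{Bounding the entropy.} Since $\pi$ is supported on $U\subseteq\B^{[n]\setminus S}$, we have $H(\pi)\le \log|U|\le n-|S|$. This is the only inequality in the argument, and it is Jensen's inequality for $\log$.

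\textbf{Conclusion.} Combining the displays,
\[
\mathbb E_{x\sim X} D_\infty(X_{x,S}) = |S|-\log|X|+H(\pi) \le |S|-\log|X|+(n-|S|) = n-\log|X| = D_\infty(X),
\]
which is the claimed inequality.
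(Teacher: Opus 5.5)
Your proof is correct and is essentially the paper's argument written out by hand. Your identity $\sum_u \pi_u \log|X_u| = \log|X| - H(\pi)$ is exactly the chain rule $H(\bm{X}_S \mid \bm{X}_{\overline{S}}) = H(\bm{X}) - H(\bm{X}_{\overline{S}})$ for $\bm{X}$ uniform on $X$, and your bound $H(\pi)\le n-|S|$ is the paper's bound $H(\bm{X}_{\overline{S}})\le|\overline{S}|$.
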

\begin{proof}
Let $\bm{X}$ be uniform on $X$. Then
\begin{gather*}
\mathop{\mathbb{E}}\limits_{x \sim X}\! D_\infty(X_{x,S})  = |S| - H(\bm{X}_S \mid \bm{X}_{\overline{S}}) = |S| - H(\bm{X}) + H(\bm{X}_{\overline{S}}) \leq |S| - (n  - D_{\infty}(X)) + |\overline{S}| = D_{\infty}(X)
\end{gather*}
where the first equality uses the agreement of the entropies $H_{\infty}, H$ on uniform distributions, and the second equality uses the chain rule for Shannon entropy.
\end{proof}

At this point we are ready to prove the mirror set lemma, whose statement we reproduce for the pleasure of the reader:

\begin{lemma*}[Mirror Set Lemma (Lemma~\ref{lem:mirror}), Restated]
Suppose $D_\infty(X)\leq k$, where $k\ge1$, $p>0$, and the rectangle $X\times Y$ satisfies the $p$-limit condition on the left. If $q = 64kp \leq 1$ then there exists a ``mirror set'' $Y'\subseteq Y$ such that the rectangle $X\times Y'$ satisfies the $q$-limit condition on the right, and $D_\infty(Y')\leq k+2pn+ 2$.
\end{lemma*}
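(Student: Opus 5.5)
The plan is to build $Y'$ as the set of ``typical outputs'' of a random process that starts at a uniform $x\sim X$ and uses the left $p$-limit condition to jump into $Y$. Fix a \emph{guiding function} $g$: for every $x\in X$ and $P\subseteq[n]$ with $Y\cap[x]_P\neq\emptyset$, let $g(x,P)$ be an arbitrary fixed element of $Y\cap[x]_P$. Define the random variable $\bm{Y}^g$ by sampling $x\sim X$ and $P\subseteq_p[n]$, and outputting $g(x,P)$, or a failure symbol $\perp$ if $g$ is undefined. By the $p$-limit condition, $\Pr[\bm{Y}^g=\perp]\le 1/4$.

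Next I would bound the point probabilities of $\bm{Y}^g$. If $g(x,P)=y$ then $x$ agrees with $y$ outside $P$, so $x\in X\cap[y]_P$, a set of size at most $2^{|P|}$. Hence
\[
\Pr[\bm{Y}^g=y]\le \frac{\mathbb{E}_{P}2^{|P|}}{|X|}=\frac{(1+p)^n}{|X|}\le \frac{e^{pn}}{2^{n-k}}\le 2^{-(n-k-2pn)} .
\]
Every $y$ in any set $Y'$ carrying $\Pr[\bm{Y}^g\in Y']\ge 1/4$ then satisfies $|Y'|\ge \frac14 2^{n-k-2pn}$. This gives $D_\infty(Y')\le k+2pn+2$.

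The main step is to show that $\bm{Y}^g$ is usually a $q$-limit of $X$ in an averaged sense. Take an independent $R\subseteq_q[n]$ and call the triple $(x,P,R)$ bad if $y=g(x,P)$ is defined but $X\cap[y]_R=\emptyset$. Outside $R\cup P$, $y$ agrees with $x$. It therefore suffices to find some $x''\in X\cap[x]_{S}$, where $S=R\cup P$, with $x''_{P\setminus R}=y_{P\setminus R}$. Such an $x''$ exists whenever the fiber $X_{x,S}$ (as in Lemma~\ref{lem:entropy}) shatters $P\setminus R$. Crucially, shattering handles every pattern, so the dependence of $y$ on $P$ is harmless.

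To bound the probability of non-shattering, I would resample as follows. First draw $S\subseteq_{s}[n]$ with $s=p+q-pq$, independently of $x$. Then, conditioned on $S$, the set $P\setminus R$ is a uniform $r$-random subset of $S$ with $r=p(1-q)/s\le p/q=1/(64k)\le 1/2$. This coupling is exact, since coordinatewise the events ``in $P\setminus R$'' and ``in $S$'' are independent across coordinates. Applying Lemma~\ref{lem:shattering} inside the cube on $S$ gives failure probability at most $4r D_\infty(X_{x,S})$. Averaging over $x\sim X$ with Lemma~\ref{lem:entropy} (valid for each fixed $S$) gives
\[
\Pr[\text{bad}]\le 4r k\le \tfrac1{16}.
\]
I expect this coupling step to be the main obstacle, namely getting the right random set to feed into the shattering lemma so that the entropy lemma applies with $S$ independent of $x$.

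Finally, let $Y'$ be the set of $y\in Y$ with $\Pr[X\cap[y]_R=\emptyset]\le 1/4$ over $R\subseteq_q[n]$. Since $R$ is independent of $(x,P)$, Markov gives $\Pr[\bm{Y}^g\in Y\setminus Y']\le \frac{1/16}{1/4}=\frac14$. So $\Pr[\bm{Y}^g\in Y']\ge 1-\frac14-\frac14=\frac12\ge\frac14$, which yields the deficit bound from the second paragraph. Every $y\in Y'$ is a $q$-limit of $X$ by construction, so $X\times Y'$ satisfies the $q$-limit condition on the right.
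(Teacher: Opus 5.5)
Your proposal is correct and follows essentially the same route as the paper. The shared steps are: a guiding function into $Y$; the $2^{|P|}$-preimage bound giving $H_\infty(\bm{Y}^g)\ge n-k-2pn$; reducing the bad event to $X_{x,P\cup Q}$ failing to shatter $P\setminus Q$; conditioning on the union so that $P\setminus Q\subseteq_r S$ with $r\le 1/(64k)$; then the shattering lemma, the entropy lemma and Markov. The only differences are cosmetic: you output $\perp$ instead of an arbitrary point of $[x]_P$, and your slightly tighter Markov accounting puts mass $1/2$ rather than $1/4$ on $Y'$.
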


We describe the high-level structure of the proof before beginning formally. Say that $x,y \in \B^n$ are ``$P$-neighbors'' if  $y \in [x]_P$ (equivalently, $ x \in [y]_P$). When $x$ is a $p$-limit of $Y$, this means that the elements of $Y$ ``surround'' $x$ in the following geometric sense: with high probability over $P \subseteq_p [n]$, $x$ has some $P$-neighbor contained in $Y$. We aim to show that, assuming this holds and $X$ is suitably large, many points in $Y$ will also be surrounded by $X$ (where in this case surrounded $=$ $q$-limit). It is natural to try to prove this by a simple random process: sample $x \sim X$, $P \subseteq_p [n]$, and choose $y$ to be a random $P$-neighbor of $x$. If we could prove that $y$ was typically a $q$-limit of $X$, we would be part of the way there, but at this point we'd have no guarantee that this random process will ever generate some $y \in Y$. However, with an extra ``guiding step'' we can nudge this random step from $X$ so that it frequently falls into $Y$. Define a \emph{guiding function} $g: X \times 2^{[n]} \to \B^n$ to be any function such that $g(x,P)\in [x]_P$ for all $x,P$. Consider choosing a guiding function $g$ which selects $g(x,P)$ to lie in $Y$ whenever $[x]_P \cap Y$ is nonempty, and which is set to some arbitrary element of $[x]_P$ otherwise. We now consider the \emph{guided random step}: sample $x,P$ and walk to $g(x,P) \in [x]_P$. Since every element of $X$ is a $p$-limit of $Y$, this random step will land in $Y$ with probability $\geq \frac{3}{4}$. If we can show that this guided random output $y$ is usually a $q$-limit of $X$, then we will have established the existence of some $y\in Y$ which is a $q$-limit of $X$. Finally, if we can prove that this guided random step has sufficiently high min-entropy, we can conclude that \emph{many} elements of $Y$ are $q$-limits of $X$ and the lemma will be proven.

At this point, we have reduced the lemma to proving a statement \emph{purely about} $X$. Namely, assuming $X$ is large, if we take any guiding function $g$ and sample $x \sim X, P \subseteq_p [n]$, with high probability we have that the guided random step $g(x,P)$ is a $q$-limit of $X$. It suffices to show that with high probability over $x,P$, \emph{every element of} $[x]_P$ is a $q$-limit of $X$; this will handle all possible guiding functions. Moreover we need to establish that any such guided random step has suitably high entropy (so that we may obtain many different $q$-limits of $X$ in $Y$ in the end). The latter entropic claim follows from a simple double-counting argument. The crux of the entire proof is showing the former claim, that every $P$-neighbor of $x$ is a $q$-limit of $X$ with high probability over $x\sim X,P\subseteq_p [n]$. 

 \begin{proof}[Proof of Mirror Set Lemma]
We follow the outline above. Fix any guiding function $g(x,P) \in [x]_P$ and consider the random variable $\bm{Y}^g$ defined as follows: sample $x \sim X, P \subseteq_p [n]$, and output $y = g(x,P)$. We establish the following two key claims for all $g$:

\[
1. \; \; \Pr[\bm{Y}^g \text{ is a }q\text{-limit of }X] \geq \frac{1}{2} \quad\quad\quad\quad 2. \; \; H_{\infty}(\bm{Y}^g) \geq n - k - 2pn
\]
Given these the lemma follows directly. Set $g(x,P)$ to be any element of $Y \cap [x]_P$ if $Y \cap [x]_P \neq \emptyset$, otherwise define it to be some arbitrary element of $[x]_P$. By the $p$-limit condition for the rectangle $X\times Y$ on the left, we have that
$\Pr[\bm{Y}^g \in Y] \geq \frac{3}{4}$. Combining this with (1), we have that $\Pr[\bm{Y}^g \in Y \land \bm{Y}^g \text{ is a }q\text{-limit of }X] \geq \frac{1}{4}$. Applying (2), $H_\infty(\bm{Y}^g \mid \bm{Y}^g \in Y \land \bm{Y}^g \text{ is a }q\text{-limit of }X ) \geq n - k - 2pn - 2$. If we now take $Y'$ to be those elements $y$ in the support of $\bm{Y}^g$ satisfying ``$y \in Y \land y \text{ is a }q\text{-limit of }X$'', then we must have $D_{\infty}(Y') \leq k + 2pn + 2$ and we are done. 

We establish (2) first since it's simpler. Let $y \in \B^n$. The key point is that, conditioned on having sampled a particular set $P$ during the generation of $\bm{Y}^g$, there are at most $2^{|P|}$ possible values of $x$ that could have produced a given output $y$.  Hence
\begin{gather*}
\Pr[\bm{Y}^g = y] = \mathbb{E}_P \Pr[\bm{Y}^g = y \mid P \text{ is sampled}] \leq  \frac{1}{|X|} \mathbb{E}_P 2^{|P|} 
= \frac{1}{|X|} \prod_{i \leq n} (\mathbb{E}_{b \sim \ber(p)} 2^{b}) = \frac{(1+p)^n}{|X|}
\end{gather*}
so $H_{\infty}(\bm{Y}^g) \geq \log |X| - n \log (1+p) \geq n - k - 2pn$.

We now establish (1). By Markov, it suffices to show 
\[
\Pr_{x,P,Q}[\;[g(x,P)]_Q \cap X \neq \emptyset] \geq \frac{15}{16}, \text{ which follows from } \Pr_{x,P,Q}[\forall  y \in [x]_P, [y]_Q \cap X \neq \emptyset] \geq \frac{15}{16}
\]
where $x \sim X, P \subseteq_p [n], Q \subseteq_q [n]$ are sampled independently in both of the above expressions. We interpret the inequality on the right above. Let $\bm{X}$ be the random variable distributed uniformly on $X$. We sample two random sets $P\subseteq_p [n], Q \subseteq_q [n]$, $|Q| \gg |P|$ (typically), and want to show that if we reveal the coordinates of $\bm{X}$ \emph{outside} $P \cup Q$, then for every possible modification of the coordinates in $P$, there is some subsequent modification to the coordinates in $Q$ which leads us back into the support of $\bm{X}$. Since the subsequent modifications to $Q$ allow us to overwrite any initial modification to $P \cap Q$, we may assume that the first step only modifies $P \setminus Q$. Then in terms of the random variable $\bm{X}$, this is saying that if we learn the bits outside of $P \cup Q$, with high probability we will not learn \emph{anything definitive} about the possible values of the random variable on the coordinates in $P\setminus Q$: every candidate value in $\B^{P \setminus Q}$ will still have positive probability. We can phrase this directly in the language of shattering: if there exists $y \in [x]_P$ such that 
$[y]_Q \cap X = \emptyset$, this means that the set $X_{x, P \cup Q}$ does not shatter $P \setminus Q$, where $X_{x, P \cup Q} = X \cap [x]_{P \cup Q}$ is defined as in the statement of the entropy lemma. So our goal is precisely to prove: 
\[
\Pr_{x,P,Q}[X_{x, P \cup Q} \text{ does not shatter }P \setminus Q] \leq \frac{1}{16}\]
Now, choose some $x\in X$ and $S \subseteq [n]$ and consider the above probability when we fix this value of $x$ and condition on $P \cup Q = S$. Note that, conditioned on $P \cup Q = S$, $P \setminus Q$ has the same distribution as a sample $R\subseteq_r S$ where $r := \frac{ p(1-q)}{p + q - pq} \leq \frac{p}{q} \leq \frac{1}{64k}$ (remember $q = 64 k p$ in the statement of the lemma). Now, we apply the shattering lemma to $X_{x,S}$ (considered to lie inside $\B^S$). Since $k \geq 1$ we have $r < \frac{1}{2}$ so we may soundly apply it and conclude
\[
\Pr_{P, Q}[ X_{x, P \cup Q} \text{ does not shatter } P \setminus Q | P \cup Q = S] = \Pr_{R \subseteq_r S}[ X_{x,S} \text{ does not shatter } R] \leq 4 r D_{\infty} (X_{x,S})
\]
We then use the entropy lemma to reason that $D_{\infty}(X_{x,S})$ is bounded by $D_\infty(X)$ on average. We have:
\begin{gather*}
\mathbb{E}_{S} \Pr_{x, P, Q}[ X_{x, P \cup Q} \text{ does not shatter } P \setminus Q \mid P \cup Q = S] \\\leq 4 r\mathbb{E}_{x,S} D_{\infty}(X_{x,S}) \leq 4 r D_{\infty}(X) \leq 4k r \leq \frac{4k}{64k} \leq \frac{1}{16}
\end{gather*}

\end{proof}

\section{A Stronger Lower Bound}\label{sec:new-lb}

The primary source of the suboptimal parameters in the lower bound in the previous section is the additive $pn$ penalty in the deficit of the mirror set $Y'$ in the mirror set lemma. This required us to take each $p_i n$ to be at least $p_{i-2}n \cdot p_{i-1} n$ and led to a lower bound $m \geq n^{\exp(-\Theta(d))}$ on communication cost. Indeed, if it were possible to prove an extension of the mirror set lemma, where all aspects of the lemma statement are the same, but the lower bound on the size of the mirror set is improved to $D_{\infty}(Y') \leq O(k)$, this would be sufficient to obtain an $m \geq \Omega_d ( n^{1/d})$ lower bound on communication. In this case, during the inductive depth $d$ argument, we could take $k_{i+1} \leq O(k_i) + m$ to be the current entropy loss at round $i+1$, and hence maintain $k_i = O_d(m)$ throughout. Since each application of our hypothetical extended mirror set lemma would merely require $p_{i+1} = O(k_i p_i)$, starting with $p_0 = \frac{2}{n}$ we would have $p_d \leq O_d( \frac{m^d}{n})$, and hence $p_d \leq \frac{1}{4}$ whenever $m \leq \epsilon_d n^{1/d}$ for a suitable $\epsilon_d >0$. With slightly more care (observing that the limit condition is satisfied on both sides at the start) this argument could be improved to $m \geq \Omega_d(n^{1/(d-1)})$, which would be optimal up to the hidden constant in $\Omega_d(\cdot)$ depending on $d$.

Unfortunately this proposed extension of the mirror set lemma is false for a quite basic reason: under the assumptions of the lemma (namely, $D_{\infty}(X) \leq k$ and everything in $X$ is a $p$-limit of $Y$), it is not possible to bound $D_\infty(Y)$ by any function of $k$. We thus can't even get the desired lower bound on the density of $Y$ itself, let alone the density of the elements in $Y$ which are $q$-limits of $X$. An example is as follows: let $[n]$ be partitioned into $\sqrt{n}$ blocks of length $\sqrt{n}$, let $Y$ consist of the strings whose parity in every block is even, and let $X$ be its complement. Then $D_\infty(X) \leq 1$, and for $p = \frac{\log n}{\sqrt{n}}$, the $p$-limit condition for $X\times Y$ on the left is satisfied: a random set $P \subseteq_p [n]$ will intersect every block with probability $\geq 1 - \sqrt{n}(1- \frac{\log n}{\sqrt{n}})^{\sqrt{n}} \geq \frac{3}{4}$. On the other hand, $D_{\infty}(Y) = \sqrt{n} \gg 1$. 

In the above counterexample, we had a set $Y$ which was both (1) sparse and (2) had many $p$-limits. Upon closer inspection, for these $p$-limits $x$, we can observe that for most $P \subseteq_p [n]$, only a very small fraction of $y \in [x]_P$ lie in $Y$. In order to get around this kind of example, it might therefore suffice to change the requirement on a limit point $x$ and demand that for a typical $P$, $Y \cap [x]_P$ contains a decently large fraction of the subcube $[x]_P$. This motivates the following definition:
\begin{definition}
For $x \in \B^n, Y \subseteq \B^n$, we say that $x$ is a $(p,k)$-limit of $Y$ if 
\[
\Pr_{P \subseteq_p [n]} [|[x]_P \cap Y| \geq 2^{|P|-k}] \geq \frac{3}{4}
\]
\end{definition}
The left and right $(p,k)$-limit conditions for a rectangle $X\times Y$ are defined exactly as before. 
The reader may guess by our choice of variable names that we will consider the $(p,k)$-limit condition for sets with min-entropy deficit $\approx k$. As a sanity check, we can at least confirm that the previous kind of counterexample no longer exists: if $D_\infty(X) \leq k$, and $X\times Y$ satisfies the $(p,k)$-limit condition on the left, then $D_{\infty}(Y) \leq O(k)$. In particular:
\[
\frac{|Y|}{2^n} =\mathop{\mathbb{E}}_{\substack{z \sim \B^n\\ P \subseteq_p [n]}} \frac{|Y \cap [z]_P|}{2^{|P|}} \geq \frac{3}{4}\frac{|X|}{2^n}2^{-k} \geq \frac{3}{4}2^{-2k}
\]
So at the very least, we have ruled out the possibility that a set $Y$ can have deficit much larger than $k$ while having at least $2^{n-k}$ many $(p,k)$-limits. It remains to show that in fact, we can obtain some $Y' \subseteq Y$ of comparable size consisting entirely of $(O(kp), O(k))$-limits of $X$. In this section we will not be able to achieve this exactly, but instead obtain a subset of $Y$ with deficit $O(k)$, consisting of $(O(k^2p), O(k))$-limits of $X$; this will lead to a bound of the form $\exp(n^{1/(2d-2)})$ rather than $\exp(n^{1/(d-1)})$.

\begin{lemma}[Strong Mirror Set Lemma]\label{lem:strong-mirror}
Let $p > 0, k \geq 1$. Say that $X,Y \subseteq \B^n$, $X\times Y$ satisfies the $(p,k)$-limit condition on the left, and $D_{\infty}(X) \leq k$. Assume $q = Ck^2p \leq \frac{1}{2}$, where $C$ is some universal constant. Then there exists $Y' \subseteq Y$, $D_{\infty}(Y') \leq 2k + 2$, such that $X\times Y'$ satisfies the $(q,65k)$-limit condition on the right.
\end{lemma}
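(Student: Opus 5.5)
We mirror the proof of Lemma~\ref{lem:mirror}, with a guided random step that takes a weighted rather than arbitrary choice inside $Y \cap [x]_P$, and a quantitative replacement for the shattering lemma.

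\textbf{The guided step and its entropy.} Sample $x \sim X$ and $P \subseteq_p [n]$. Call $(x,P)$ \emph{good} if $|[x]_P \cap Y| \geq 2^{|P|-k}$; by the $(p,k)$-limit condition this happens with probability at least $3/4$. In that case, output a uniformly random $y \in [x]_P \cap Y$; otherwise output a uniform element of $[x]_P$. Call the output $\bm{Y}$. For a fixed $y$ and fixed $P$, at most $2^{|P|}$ points $x$ satisfy $y\in[x]_P$, and each contributes probability at most $\frac{1}{|X|} \cdot 2^{-(|P|-k)}$. Summing gives
\[
\Pr[\bm{Y}=y] \leq \frac{2^k}{|X|},
\]
so $H_\infty(\bm{Y}) \geq n-2k$. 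This removes the $(1+p)^n$ loss of Lemma~\ref{lem:mirror}. Conditioning on the event ``$(x,P)$ good and $\bm{Y}$ is a $(q,65k)$-limit of $X$'', which has probability at least $1/4$, gives $Y'$ with $D_\infty(Y') \leq 2k+2$, provided we show that $\bm{Y}$ is a $(q,65k)$-limit of $X$ with probability at least $1/2$.

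\textbf{The limit claim.} By Markov it suffices to show
\[
\Pr_{x,P,Q,y}\bigl[\,|[y]_Q \cap X| < 2^{|Q|-65k}\,\bigr] \leq 1/16 .
\]
Since $y$ is at most $2^k$-times overweighted relative to a uniform point of $[x]_P$, it is enough to control the event $E$ that \emph{some noticeable fraction} of $z \in [x]_P$ have $|[z]_Q \cap X| < 2^{|Q|-65k}$. The argument then paying $2^k$ fails, so we must use averaging more carefully: bound $\Pr[\text{bad} \mid \text{uniform } z]$ by about $2^{-k}/16$, or better, argue directly.

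Condition on $S = P\cup Q$ and on $x_{\bar S}$, and let $Z = X_{x,S} \subseteq \B^S$, which has average deficit at most $k$ by the entropy lemma. With $R = P\setminus Q$ distributed as $R\subseteq_r S$ for $r \approx p/q = 1/(Ck^2)$, the quantity $|[z]_Q\cap X|$ is the number of $w\in Z$ with $w_R=z_R$. So the claim becomes a \emph{light patterns lemma}: for $Z\subseteq \B^S$ with deficit $k$ and $R\subseteq_r S$, the projection $Z_R$ is close to uniform, in the sense that the mass of patterns $a\in\B^R$ whose fiber has relative density below $2^{-65k}$ is small. Precisely, we need
\[
\Pr_{a\sim \B^R}\bigl[\,|\{w\in Z: w_R=a\}| < 2^{|S|-|R|-65k}\,\bigr]
\]
to be small in expectation over $R$, and small enough to absorb the $2^k$ guiding overweight. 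Since $y$ is chosen within $Y$, the weighting is uniform on a $2^{-k}$-dense subset, so we need this light-pattern mass to be at most $2^{-k}/64$, or a smarter Hölder-type absorption.

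\textbf{Main obstacle.} The main obstacle is proving the light patterns lemma, and I would try Fourier analysis. Let $f = 1_Z 2^{|S|}/|Z|$, so that $\mathbb{E} f = 1$ and $\mathbb{E} f\log f \leq k$. The projected density $f_R(a)$ has Fourier coefficients $\hat f(T)$ for $T\subseteq R$. Because $R$ is $r$-random, we get
\[
\mathbb{E}_R\, \|f_R-1\|_2^2 = \sum_{T\neq\emptyset} r^{|T|}\hat f(T)^2 ,
\]
which is a noise-operator quantity. Hypercontractivity / level-$k$ inequalities for functions of entropy deficit $k$ bound the level-$\ell$ weight by about $(O(k)/\ell)^\ell$ (after truncating, or passing to a subset), so with $r = 1/(Ck^2)$ the sum is $O(1/k)$-small. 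Light patterns make $f_R$ far below $1$, and Chebyshev bounds their mass; the extra factor of $k$ in $r$ is what lets this second-moment bound absorb the $2^{-k}$ weighting. I expect the $2^k$ absorption to be the delicate step. If it fails, I would instead bound the bad mass in $\bm{Y}$ via Cauchy--Schwarz against the min-entropy bound $\Pr[\bm{Y}=y]\leq 2^{2k}/2^n$, requiring the light mass to be only $2^{-2k}$-small, which the $65k$ threshold (giving large slack in $\log f_R$) should allow.
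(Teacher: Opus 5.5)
There is a genuine gap, and it sits exactly in the step you flag as delicate. Everything before it follows the paper's proof: the guided step, the bound $\Pr[\bm{Y}=y]\le 2^k/|X|$, the conditioning that yields $D_\infty(Y')\le 2k+2$, and your observation that $y$ may sit on an adversarial $2^{-k}$-dense part of $[x]_P$. You correctly conclude that, for most $(x,P,Q)$, the light patterns must be at most about a $2^{-k-6}$ fraction of $\B^R$. Here $R=P\setminus Q$, and a pattern $a$ is light if its fiber in $Z=X_{x,S}$ has relative density below $2^{-65k}$. This is the paper's reduction to its light patterns lemma, applied with deficit parameter $64k$ after discarding the $(x,S)$ with $D_\infty(X_{x,S})>64k$ via the entropy lemma and Markov.

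Your second-moment proof of that lemma cannot produce an exponentially small fraction. You get $\mathbb{E}_R\|f_R-1\|_2^2\le\sum_{\ell\ge1}(C_0rk)^\ell=O(rk)=O(1/k)$, and then Markov over $R$ plus Chebyshev give a light fraction of $O(1/k)$, not $2^{-\Omega(k)}$. So the claim that the extra factor of $k$ in $r$ ``absorbs the $2^{-k}$ weighting'' fails: a factor $1/k$ in $r$ buys only a factor $1/k$ in the bound. Finishing this way would need $r\lesssim 2^{-k}$.

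The depth of the threshold does not help either. Light patterns have $0\le f_R(a)<1$, so $|f_R(a)-1|\le 1$, and Chebyshev cannot distinguish a threshold of $2^{-k}$ from one of $1/2$. Slack in $\log f_R$ only pays off for negative moments of $f_R$, which is what the harmonic-mean bound of Section 4 exploits; it does not help $\|f_R-1\|_2$. Your Cauchy--Schwarz fallback needs an even smaller light mass ($2^{-2k}$), so it has the same problem.

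The missing idea is to use a moment of order $t\approx k$. The paper takes $t=\max\{k',2\}$ with $k'=64k$ the deficit parameter. It combines three ingredients:
\begin{itemize}
\item the $L_2\to L_t$ hypercontractive inequality $\|g\|_t^2\le\sum_T (t-1)^{|T|}\hat g(T)^2$;
\item spreadness, $\Pr[R\supseteq T]\le r^{|T|}$;
\item the level-$\ell$ inequality.
\end{itemize}
Together these give $\mathbb{E}_R\|f_R-1\|_t^2\le\sum_{\ell\ge1}(C_0rk'(t-1))^\ell=O(rk^2)$.

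With $r\le p/q=1/(Ck^2)$ for $C$ large enough, Markov over $R$ gives $\|f_R-1\|_t\le 1/4$ with probability at least $63/64$. A $t$-th moment Markov bound then shows that $\{a: f_R(a)\le 1/2\}$ has density at most $2^t\|f_R-1\|_t^t\le 2^{-t}\le 2^{-64k}$. This covers your light patterns, which satisfy $f_R(a)<2^{D_\infty(Z)-65k}\le 2^{-k}$.

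This is where the second factor of $k$ in $q=Ck^2p$ is actually spent: on the $(t-1)^{|T|}$ hypercontractive weights at $t\approx k$, not on sharpening a second-moment estimate.
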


The high level structure of the argument will be the same; the main difference in the proof is that we must replace the shattering lemma (Lemma~\ref{lem:shattering}) with the following:
\begin{lemma}[Light Patterns]\label{lem:light-patterns}
Let $ k \geq 1, r \in [0,1]$, and let $\bm{X}$ be a random variable in $\B^n$, $D_\infty(\bm{X}) \leq k$. There is a universal constant $C$ such that the following holds whenever $r \leq (Ck)^{-2}$: with probability $\geq \frac{63}{64}$ over $R \subseteq_r [n]$ we have:
\[
|\{ z \in \B^R \mid \Pr[\bm{X}_R = z] \leq 2^{- |R| - 1}\}| \leq 2^{|R|- k }
\]
\end{lemma}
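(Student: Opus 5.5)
The plan is a Fourier-analytic moment argument. I expect to need moments of order about $k$ rather than a second-moment (Chebyshev) bound.

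\textbf{Setup.} Let $g(x) = 2^n\Pr[\bm{X}=x]$ be the density of $\bm{X}$ relative to uniform. Then $\mathbb{E}_x g = 1$, and $0 \le g \le 2^k$ by the deficit hypothesis. For a fixed $R$, the projected density $g_R(z) = 2^{|R|}\Pr[\bm{X}_R = z]$ on $\B^R$ has Fourier coefficients $\widehat{g_R}(S) = \widehat{g}(S)$ for $S\subseteq R$. Write $f_R = g_R - 1$. A pattern $z$ is light exactly when $f_R(z) \le -\tfrac12$. By Markov, for any $t$,
\[
\Pr_{z\sim\B^R}[z \text{ light}] \le 2^{2t}\,\mathbb{E}_z f_R(z)^{2t}.
\]
So it suffices to show $\mathbb{E}_R\mathbb{E}_z f_R^{2t} \le 2^{-3k-6}$ for a suitable $t = \Theta(k)$. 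A final Markov step over $R$ then gives the $\frac{63}{64}$ statement.

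\textbf{Why the second moment fails.} With $t=1$ we get $\mathbb{E}_R\mathbb{E}_z f_R^2 = \sum_{S\neq\emptyset} r^{|S|}\widehat g(S)^2$. Controlling this by level-$j$ inequalities, $W^{j}[g] \le (O(k)/j)^j$, gives only $O(kr) = O(1/k)$, because the level-$1$ term already dominates. That is far from $2^{-k}$. Hence the need for higher moments.

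\textbf{Main step: random projection as noise.} Averaging over $R\subseteq_r[n]$ damps $\widehat g(S)$ like a noise operator. I would first try to dominate $\mathbb{E}_R \|f_R\|_{2t}^{2t}$ by $\|T_\rho (g-1)\|_{2t}^{2t}$ (up to constant factors) with $\rho^2 = O(r)$. One route is to view $z = \bm{X}_R$ together with uniform bits off $R$, which is a noisy copy of $\bm{X}$ with correlation governed by $r$. This comparison step, and making the correlation parameter come out right, is where I expect the main obstacle.

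\textbf{Applying hypercontractivity.} Given that comparison, I would apply hypercontractivity from below: $\|T_\rho g\|_{2t} \le \|g\|_{1+\rho^2(2t-1)}$. Set $\delta = \rho^2(2t-1) = O(rk)$, which is $O(1/(C^2k))$ under $r \le (Ck)^{-2}$. Log-convexity of norms, using $\|g\|_1 = 1$ and $\|g\|_\infty \le 2^k$, gives
\[
\|g\|_{1+\delta} \le 2^{k\delta/(1+\delta)} = 1 + O(1/C^2).
\]
This is only a norm bound on $T_\rho g$, not on $T_\rho g - 1$. To extract smallness of $\mathbb{E}(T_\rho g-1)^{2t}$, I would use the same log-convexity trick combined with the mean-one constraint. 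Concretely, I would bound the lower tail directly, $\Pr[T_\rho g \le \tfrac12]$, via $\mathbb{E}[(T_\rho g)^{-s}]$ or a reverse-hypercontractive estimate. The aim is a lower-tail bound of the form $\exp(-\Omega(C\,k))$, which beats $2^{-3k-6}$ once $C$ is a large universal constant.
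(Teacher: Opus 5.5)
\textbf{The reduction fails.} The quantity you plan to make exponentially small is not small, so no later analysis can rescue the plan. Take $\bm{X}$ uniform on the subcube $\{x : x_1=\cdots=x_k=0\}$. The event $|R\cap[k]|=1$ has probability $kr(1-r)^{k-1}\ge kr/2$. On that event the projected density $g_R$ equals $2$ on half of $\B^R$ and $0$ on the other half. So $f_R\equiv\pm1$, $\mathbb{E}_z f_R^{2t}=1$, and half of all patterns are light. Hence $\mathbb{E}_R\mathbb{E}_z f_R^{2t}\ge kr/2$, and even $\mathbb{E}_R\Pr_z[z\text{ light}]\ge kr/4$. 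At $r=(Ck)^{-2}$ these are $\Theta(1/(C^2k))$: only polynomially small in $k$, nowhere near $2^{-3k-6}$ or the $2^{-k-6}$ your final Markov step needs. The lemma is true only because it tolerates an exceptional set of $R$ of constant probability (here about $rk$). The rare $R$ that hit a few frozen coordinates contribute $\Omega(rk)$ to any $R$-average, so averaging over $R$ and then applying Markov at an exponentially small threshold cannot work.

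\textbf{Your main step is also false.} With $\rho=O(\sqrt r)=O(1/(Ck))$, as you need, $T_\rho g=\prod_{i\le k}(1+\rho(-1)^{x_i})$ satisfies $|T_\rho g-1|\le(1+\rho)^k-1=O(1/C)$ pointwise. So $\norm{T_\rho(g-1)}_{2t}^{2t}\le O(1/C)^{2t}$, which is exponentially smaller than $\mathbb{E}_R\norm{f_R}_{2t}^{2t}$. Convexity in fact pushes the other way. Viewing $f_R$ as $x\mapsto f_R(x_R)$ on $\B^n$, you have $\mathbb{E}_R f_R=T_r g-1$, so Jensen gives $\mathbb{E}_R\norm{f_R}_{2t}^{2t}\ge\norm{T_r g-1}_{2t}^{2t}$. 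A random projection reveals each coordinate fully with probability $r$, rather than revealing every coordinate slightly. It matches a noise operator only at the level of second moments: $\Pr_R[S\subseteq R]\,\widehat g(S)^2=\widehat{T_{\sqrt r}g}(S)^2$. Your closing lower-tail or reverse-hypercontractive step inherits this problem, and it is also left unspecified.

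\textbf{The missing idea: take the root before averaging over $R$.} The factor $2^{-k}$ should come from using an $L_t$ norm with $t\approx k$ inside a fixed $R$, while over $R$ you only need constant-probability control. This is the paper's argument.
\begin{enumerate}
\item For fixed $R$, the $L_2\to L_t$ hypercontractive inequality on $\B^R$ gives $\norm{f_R}_t^2\le\sum_{\emptyset\ne S\subseteq R}(t-1)^{|S|}\widehat g(S)^2$. This bound is linear in the Fourier weights, so it averages exactly over $R$.
\item With $t=\max\{k,2\}$ and the level-$\ell$ inequality, $\mathbb{E}_R\norm{f_R}_t^2\le\sum_{\ell\ge1}(r(t-1))^\ell\sum_{|S|=\ell}\widehat g(S)^2\le\sum_{\ell\ge1}(C'rk(t-1))^\ell=O(rk^2)$.
\item Markov applied to $\norm{f_R}_t^2$ at a constant threshold gives $\norm{f_R}_t<\frac14$ outside probability $\frac1{64}$, once $r\le(C''k)^{-2}$.
\item For such $R$, more than a $2^{-k}$ fraction of light patterns is impossible, since it would force $\norm{f_R}_t\ge(2^{-t}2^{-k})^{1/t}\ge\frac14$.
\end{enumerate}
Your own computation $\mathbb{E}_R\norm{f_R}_2^2=O(kr)$ is exactly this argument in $L_2$. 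Raising the norm index to $t\approx k$ costs only a factor $k$ in the $R$-average, and that factor is where the hypothesis $r\le(Ck)^{-2}$ comes from.
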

We will defer the proof of this lemma to the following subsection, and proceed now to the proof of the strong mirror set lemma under the assumption that it is true. 

\begin{proof}[Proof of the Strong Mirror Set Lemma]
Define a ``$k$-guiding function'' to be a function which selects, for each $x\in X$, $P \subseteq [n]$, a set $G(x,P) \subseteq [x]_P$ of size $|G(x,P)| = \lceil 2^{|P| - k}\rceil $ (compare to the guiding function from the proof of the original mirror set lemma, which selected a single element of $[x]_P$). We now consider the random variable $\bm{Y}^G$ defined as follows: sample a uniform $x \sim X$, $P \subseteq_p [n]$, a uniform $y \sim G(x,P)$, and output $y$. By the left $(p,k)$-limit condition on $X\times Y$, we know that for an appropriate choice of $G$, we have $\Pr[\bm{Y}^G \in Y] \geq \frac{3}{4}$. We want to show that (1) $H_{\infty}(\bm{Y}^G) \geq n - 2k $ and (2) with probability $\geq \frac{1}{2}$, $\bm{Y}^G$ is a $(q, 65k)$-limit of $X$. If we have both of these, we may take $Y'$ to be the set of $(q,65k)$-limits of $X$ lying in $Y$; combining (1)/(2) and the fact that $\Pr[\bm{Y}^G \in Y] \geq \frac{3}{4}$ guarantees that $D_{\infty}(Y') \leq 2k + 2$ and we will be done.

The proof of (1) was already sketched in the beginning of this section: for each $y \in \B^n$ we have:
\begin{gather*}
\Pr[\bm{Y}^G = y] = \mathbb{E}_P \frac{1}{|X|}\sum_{x \in X} \frac{\bm{1}\{ y \in G(x,P)\} }{|G(x,P)|} \leq \mathbb{E}_P \frac{2^{|P|}}{|X|} 2^{k - |P|} \leq \frac{2^k}{|X|} \leq 2^{2k-n}
\end{gather*}
Again following the proof of the mirror set lemma we apply a Markov argument and reduce (2) to showing
\setcounter{equation}{2}
\begin{gather}
\Pr_{x,P,Q, y \sim G(x,P)} \Bigl[ |[y]_Q \cap X| \geq 2^{|Q| - 65k} \Bigr] \geq \frac{15}{16} \label{eq:1}
\end{gather}
where $x \sim X, P \subseteq_p [n], Q \subseteq_q [n]$ are sampled independently. We claim that (\ref{eq:1}) in turn follows from showing that, for the set $A(x,P,Q):= \{y \in [x]_P \mid |[y]_Q \cap X| < 2^{|Q|-65k} \}$, we have
\begin{gather}
\Pr_{x,P,Q} \Bigl[ |A(x,P,Q)| \leq 2^{|P| - k - 6} \Bigr] \geq \frac{31}{32} \label{eq:2}
\end{gather}
Indeed, if (\ref{eq:2}) holds, then with probability at least $\frac{31}{32}$ over $x,P,Q$, we will have that a random $y \sim G(x,P)$ lands in $A(x,P,Q)$ with probability at most $2^{-6}$, and hence the overall probability over $x,P,Q,y \sim G(x,P)$ that $|[y]_Q \cap X| < 2^{|Q|-65k}$ is bounded by $\frac{1}{32} + 2^{-6} \leq \frac{1}{16}$. 

We interpret the quantity in (\ref{eq:2}). We want to show that, with probability $\geq \frac{31}{32}$ over the choice of $x,P,Q$ the following holds: for at least a $(1 - 2^{-k-6})$ fraction of the possible modifications we can make to $x$ on the coordinates in $P$, at least a $2^{-65k}$ fraction of \emph{subsequent} modifications to the variables in $Q$ will cause us to end up in $X$. To simplify the picture a bit, we observe that in the first phase we may instead count the fraction of modifications made to variables in $P\setminus Q$, since we may freely reassign the variables in $P \cap Q$ in the second stage, and every assignment in $P \setminus Q$ has exactly $2^{|P \cap Q|}$ extensions along the variables in $P$. 
Let $R = P \setminus Q$ and $S = P \cup Q$ in what follows. Say that in the first step we modified the coordinates in $R$ to some pattern $w \in \B^{R}$. We then want to lower bound the fraction of patterns $z \in \B^Q$ such that the combined string using $z$ on the bits in $Q$, $w$ on the bits in $R$, and $x_{\,\overline{S}}$ on the bits outside $S$ lies in $X$; in particular we want to lower bound this fraction by $2^{-65k}$. For the set $X_{x,S} = X \cap [x]_{S}$, we can observe that this ``fraction of $Q$-patterns leading us to $X$'' is given by:
\begin{gather}
\frac{|\{ x' \in X_{x,S} \mid x'_R = w\}|}{2^{|Q|}} = \frac{|X_{x,S}|}{2^{|Q|}}\Pr[(\bm{X}_{x,S})_{R} = w]\\
=   2^{|S|- D_\infty(X_{x,S}) - |Q|} \Pr[(\bm{X}_{x,S})_{R} = w]=  2^{|R|- D_\infty(X_{x,S})} \Pr[(\bm{X}_{x,S})_{R} = w]
\end{gather}
where $\bm{X}_{x,S}$ is uniform on $X_{x,S}$.

At this point, we proceed as in the mirror set lemma, considering the quantity in (\ref{eq:2}) when we fix a particular value of $x$ and $S$ and condition on $P \cup Q = S$. Say that $(x,S)$ are ``good'' if $D_{\infty}(\bm{X}_{x,S}) \leq 64k$. Under this conditioning, $R = P \setminus Q$ is sampled as $R\subseteq_r S$ for some $r \leq \frac{p}{q}$ (this is the same calculation from the original mirror set lemma). 
Under the assumption that $(x,S)$ are good we may apply the ``light patterns lemma'' (Lemma~\ref{lem:light-patterns}) with min-entropy deficit parameter $64k$ to the random variable $\bm{X}_{x,S}$ in $\B^S$, and conclude: with probability $\geq \frac{63}{64}$ over $R$, for all but a $2^{-64k} \leq 2^{-k-6} $ fraction of the possible assignments $w \in \B^R$, $\Pr[(\bm{X}_{x,S})_R = w] \geq 2^{-|R|-1}$. 
We may safely apply this lemma provided $r \leq (C64k)^{-2}$ for a suitably large constant $C$, which the assumptions of the lemma guarantee. Now, overall we get that whenever $(x,S)$ are good, with probability $\geq \frac{63}{64}$ over $R$, for all but a $2^{-k-6}$ fraction of patterns $w \in \B^{R}$, the fraction of $Q$ patterns leading us into $X$ is at least
\[
2^{|R|-D_{\infty}(X_{x,S})} 2^{-|R|-1} \geq 2^{ - 64k - 1}\geq 2^{-65k}
\]
We conclude with an application of the entropy lemma (Lemma~\ref{lem:entropy}), which (using an additional Markov argument) tells us that $(x,S)$ are good with probability at least $\frac{63}{64}$. Overall we determine that the probability in (\ref{eq:2}) is lower bounded by $\frac{63}{64}$ for any fixing of $(x,S)$ which are good, and the probability that $(x,S)$ are good is at least $\frac{63}{64}$, from which (\ref{eq:2}) follows by a union bound.

\end{proof}
From this we immediately obtain our improved communication adversary/circuit lower bound:
\begin{theorem}
\label{thm:strong-parity-game}
For each $d\geq 2$ there is $\epsilon_d>0$ such that
$\mathrm{KW}(\mathrm{Parity}_n)$ does not have a $d$-round communication protocol of cost $m$ when $m\leq \epsilon_d n^{1/(2d-2)}$, provided $n$ is sufficiently large. Consequently any depth $d$ circuit computing $\mathrm{Parity}_n$ must have $>2^{\epsilon_d n^{1/(2d-2)}}$ wires.
\end{theorem}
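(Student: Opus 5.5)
We will run the same adversary argument as in the proof of Theorem~\ref{thm:parity-game}, but now maintaining the $(p,k)$-limit invariant and using the Strong Mirror Set Lemma (Lemma~\ref{lem:strong-mirror}) in place of Lemma~\ref{lem:mirror}. After $i$ rounds the adversary holds a rectangle $X_i\times Y_i$ with $D_\infty(X_i),D_\infty(Y_i)\le k_i$, satisfying the $(p_i,k_i')$-limit condition on at least one side. Let $c=65$ and $C$ the constant of Lemma~\ref{lem:strong-mirror}. We take the deficit parameters to grow by a constant factor each round, roughly $k_{i+1}=3k_i+m+2$ and $k'_{i+1}=65\max(k_i,k'_i)$. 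Starting from $k_0=k'_0=\Theta(m)$, this gives $k_i,k'_i\le c_d m$ for a constant $c_d$ depending only on $d$. The probabilities follow $p_{i+1}=C K_i^2 p_i$ with $K_i=\max(k_i,k_i')$, so $p_d\le C^d c_d^{2d} m^{2d}\,p_0$ up to constants.

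To obtain the exponent $1/(2d-2)$ rather than $1/(2d)$, we exploit that the starting rectangle satisfies the limit condition on \emph{both} sides. Take $X_0,Y_0$ the even/odd strings and $p_0=\Theta(1/n)$. Fix $x$; whenever $P\ne\emptyset$, exactly half of $[x]_P$ lies in $Y_0$, so $|[x]_P\cap Y_0|=2^{|P|-1}\ge 2^{|P|-k}$. Hence the $(p_0,1)$-limit condition holds on both sides as soon as $\Pr[P\neq\emptyset]\ge 3/4$. Because both sides hold, the first message (say Alice's) only shrinks $X_0$, and the left condition survives for free. More strongly, we keep the right-side invariant during round 1 as well. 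Every $y\in Y_0$ is still a limit of $X_0$, and after restricting to the largest part $X_{0,j}$ we ask whether $Y_0$ remains a limit of it. It does not automatically, so instead we simply observe that round 1 costs only an additive $m$ in deficit and no mirror step.

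The net effect is that only $d-1$ mirror applications are ever needed: one for each alternation of speaker after the first round. Each application multiplies $p$ by $O_d(m^2)$, so $p_d=O_d(m^{2(d-1)}/n)$. We need $p_d\le 1/4$ and all intermediate $q$'s $\le 1/2$, which holds when $m\le\epsilon_d n^{1/(2d-2)}$. When the same player speaks twice no mirror is needed, so one can always assume alternation in the worst case. The bookkeeping should mirror the proof of Theorem~\ref{thm:parity-game} exactly:
\begin{itemize}
\item if the limit condition is on the speaker's side, take the largest part, adding $m$ to the deficit;
\item otherwise, first apply Lemma~\ref{lem:strong-mirror} to pass to $X'$ with deficit $\le 2k_i+2$ satisfying the $(C K_i^2p_i,65K_i)$-limit condition on the speaker's side, then take the largest part.
\end{itemize}
The limit condition is universally quantified, so it is inherited by subsets, and it is monotone in $p$ and $k$ in the obvious direction (larger $k$ is weaker). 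Monotonicity in $p$ is less obvious for $(p,k)$-limits. I would handle this by just using the exact $q$ produced by the lemma rather than an upper bound, or by noting that the final terminal check only needs $p_d\le1/4$.

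For the terminal case, suppose the $(p_d,k_d')$-limit holds, say on the left, with $p_d\le 1/4$ and $Y_d\ne\emptyset$. For $x\in X_d$ and index $i$, with probability $\ge 3/4-1/4>0$ we get some $P\not\ni i$ with $[x]_P\cap Y_d\neq\emptyset$. That yields $y\in Y_d$ agreeing with $x$ at $i$, so no zero-round protocol exists. The circuit consequence follows from Lemma~\ref{lem:corresp}.

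The main obstacle I expect is the first round, i.e., correctly squeezing out the $d-1$ versus $d$ count. The crude count would give $n^{1/(2d)}$, and one must check that the double-sided initial invariant really saves one mirror step. My plan is to note that whichever player speaks first, the invariant on \emph{that} side already holds, so round 1 needs no mirror. Every later round needs at most one mirror, and the last round's mirror output only needs $p_d\le1/4$. Counting carefully, rounds $2,\dots,d$ contribute $d-1$ factors of $O(m^2)$. Verifying the constants (that $q\le 1/2$ at every stage and $n$ large) is routine.
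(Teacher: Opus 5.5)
Your proposal is correct and follows essentially the same route as the paper's sketch. The Strong Mirror Set Lemma replaces Lemma~\ref{lem:mirror}, and the invariant is a $(p,k)$-limit condition for \emph{some} $p\le p_i$, which is exactly how the paper sidesteps the non-monotonicity in $p$ you noticed. The two-sided limit condition on the initial parity rectangle lets round 1 pass without a mirror step, so only $d-1$ factors of $O_d(m^2)$ accumulate and $p_d=O_d(m^{2(d-1)}/n)$.

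One small bookkeeping slip: a mirror step applied with parameter $K_i=\max(k_i,k_i')$ yields deficit $2K_i+2$, not $2k_i+2$. This is why the paper uses a single parameter $k_i=65^i m$ to bound both quantities, and it does not affect your $O_d(m)$ conclusion.
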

\begin{proof}[Proof Sketch]
The proof is essentially identical to that of Theorem~\ref{thm:parity-game}, using our new limit condition and the strong mirror set lemma in place of the old condition and old lemma. We indicate merely how the parameters must be set. Let $p_0 =p_1= \frac{2}{n}$, $k_i = 65^i m$, and for $i > 0$, $p_{i+1} = Ck_i^2 p_i$ for a suitable constant $C$. We maintain that, if $X_i \times Y_i$ is the current rectangle after round $i$, then it satisfies the $(p,k)$-limit condition (either on the left or right) for some $p \leq p_i, k \leq k_i$, and that $D_\infty(X_i), D_\infty(Y_i) \leq k_i$. Initially, $X_0 \times Y_0$ satisfies the limit condition on both sides, hence we can get through the first round without applying the mirror set lemma which allows us to take $p_0 = p_1$.
\end{proof}

\subsection{Random Projections of a Dense Set Are Close to Uniform}

The shattering lemma from Section~\ref{sec:original-proof} tells us the following: if $\bm{X}$ is a high min-entropy variable in $\B^n$, and $R$ is a random small set of coordinates, then the random variable $\bm{X}_R$ (the projection of $\bm{X}$ onto $R$) will be close to the uniform distribution with high probability over $R$, in the following weak sense: the support of $\bm{X}_R$ equals that of the uniform distribution (namely, all of $\B^R$). In particular, the quantitative statement says that if $R \subseteq_r [n]$, then $\bm{X}_R$ ``looks uniform'' in this sense with probability $\geq 1 - O(r D_{\infty}(\bm{X}))$. Note that this is essentially tight and matches what will occur when $\bm{X}$ is uniform on a subcube with, say, the first $k$ coordinates fixed to 0 (so $D_\infty(\bm{X})=k$). In this case we will have that $\bm{X}_R$ is exactly uniform whenever $R \cap \{ 1,\ldots, k \} = \emptyset$, which happens with probability $\geq 1- O(kr)$. Another more basic lemma of the same flavor is Shearer's lemma (which was in fact used in the proof of the weaker shattering lemma in \cite{GRSS2024}), which once again says that under the same conditions, $\bm{X}_R$ will be close to uniform for most $R$; in this case, ``close to uniform'' is quantified by Shannon entropy. In particular, we can conclude that with probability $\geq 1 - O(r D_{\infty}(\bm{X})) $ over $R$, $H(\bm{X}_R) \geq |R| - 0.1$, in other words we've only learned a fraction of a bit of Shannon information about $\bm{X}_R$. 

In this section we present two more results of the same form, using different notions of closeness to uniformity. 
Our goal is to prove the ``light patterns'' lemma (Lemma~\ref{lem:light-patterns}), which says that aside from a very small number of exceptions, every pattern in $\B^R$ will occur in $\bm{X}_R$ with at least half its probability mass under the uniform distribution. While the ideal version of this result would require only that $r D_{\infty}(\bm{X}) \ll 1$ (matching the case of subcubes, and the shattering/Shearer lemmas), in this section we are only able to establish it in the regime $r D_\infty(\bm{X})^2 \ll 1$. We will prove Lemma~\ref{lem:light-patterns} as a simple corollary of a more general result that we call the \emph{uniformity lemma} which says that 
with high probability over $R$, the $\norm{\cdot}_t$ distance of the (appropriately normalized) probability mass function of $\bm{X}_R$ from uniform is small. Our proof is inspired by the Fourier-analytic proof of the \emph{uniform marginals lemma} in \cite{GPW2020} and its subsequent simplification using the level-$\ell$ inequality in \cite{Wu2018}. A key step in the proof of the uniform marginals lemma is to show that, for high entropy $\bm{X}$ and a random set $R$, $\bm{X}_R$ is close to uniform in $L_\infty$ distance \emph{when the distribution is averaged over $R$}. This is incomparable to our task; on the one hand we require a stronger statement about the distribution of $\bm{X}_R$ for most \emph{fixed values} of $R$, but on the other hand our notion of distributional closeness is weaker. More precisely, in the terminology introduced below, \cite{GPW2020,Wu2018} aim to bound $\norm{\mathbb{E}_R f_R - 1}_\infty$  \footnote{In their setting, the random set $R$ has a fixed size $m$ and there is a canonical identification of each projection $x \mapsto x_R$ with a map $\B^n \to \B^m$; this makes $\mathbb{E}_R f_R$ well-defined.}, while our goal is to bound $\norm{f_R - 1}_t$ for most $R$, with $t < \infty$.

We begin with some standard definitions from the analysis of Boolean functions; see \cite{ODonnell2014} for a comprehensive textbook on the subject.
\begin{definition}
For $S \subseteq [m]$, the $S^{th}$ Fourier character $\chi_S: \B^m \to \{ -1,1\}$ is given by $\chi_S(z) = (-1)^{\sum_{i \in S} z_i}$. For a function $f: \B^m \to \mathbb{R}$, the $S^{th}$ Fourier coefficient of $f$ is given by $\widehat{f}(S) = \mathbb{E}_{z \sim \B^m} f(z) \chi_S(z)$. We use $\norm{f}_t:= \bigl( \mathbb{E}_{z \sim \B^m} |f(z)|^t \bigr)^{1/t}$ to denote the (normalized) $L_t$ norm of $f$.
\end{definition}

As in the case of Shearer's lemma, our results will apply to a wider family of distributions over the random set $R$ than just the law of $R\subseteq_r[n]$; they will hold when $R$ is drawn from any \emph{$r$-spread} distribution.
\begin{definition}
A distribution $\mathcal{R}$ over subsets of $[n]$ is $r$-spread if, for every $S \subseteq [n]$, we have $\Pr_{R \sim \mathcal{R}}[ R \supseteq S] \leq r^{|S|}$. 
\end{definition}
We will only need our lemma in the particular case where $R\subseteq_r[n]$ but we state it in this general form since we believe it might be of some wider interest. In particular, we emphasize that the uniform marginals lemma \cite{GPW2020,Wu2018} mentioned above applies in this same setting (where the random set $R$ is only guaranteed to be $r$-spread rather than having a particular distribution). In the literature on lifting, spreadness is instead referred to as ``blockwise density.'' Spread sets arise also in the context of the sunflower lemma \cite{ALWZ2021} and the fractional expectation threshold conjecture \cite{Talagrand2010,FKNP2021} (see \cite{LMMPZ2022} for work connecting the sunflower lemma to lifting).
\begin{lemma}[Uniformity Lemma]\label{lem:uniformity}
Let $k \geq 1, t \geq 2, r \in [0,1]$ be given. Say that $\bm{X}$ is a random variable in $\B^n$, $D_{\infty}(\bm{X}) \leq k$, and $\mathcal{R}$ is an $r$-spread distribution on subsets of $[n]$. Let $f(x)=2^n\Pr[\bm{X}=x]$ be the renormalized probability mass function of $\bm{X}$. For $R \subseteq [n]$, let $f_R: \B^R \to \mathbb{R}$ be the renormalized probability mass function of $\bm{X}_R$ given by:
\[
f_R(z) = 2^{|R|} \Pr[ \bm{X}_R = z] = \mathbb{E}_{x \sim \B^n}[f(x) \mid x_R=z]
\]
Then for any $\eta \in (0,1)$, 
\[
\Pr_{R \sim \mathcal{R}} \Bigl[ \norm{f_R  - 1 }_t > \sqrt{\frac{Crk(t-1)}{\eta}}\Bigr]  \leq \eta
\]
provided $rk(t-1) < \epsilon$, where $C\in \mathbb{N}, \epsilon > 0$ are universal constants.
\end{lemma}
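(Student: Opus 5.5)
The plan is to control $\norm{f_R - 1}_t$ through the Fourier expansion of $f_R$, and to reduce everything to the level weights of the single function $f$, which are controlled by the level-$\ell$ inequality. The final step is Markov's inequality applied to $\norm{f_R-1}_t^2$. Concretely, I will show that
\[
\mathbb{E}_{R\sim\mathcal{R}}\norm{f_R-1}_t^2 \le C\,rk(t-1),
\]
after which Markov gives $\Pr_R\bigl[\norm{f_R-1}_t^2 > Crk(t-1)/\eta\bigr]\le \eta$, which is exactly the stated bound.

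\emph{Step 1: Fourier expansion of the projection.} Since $f_R(z)=\mathbb{E}_x[f(x)\mid x_R=z]$, averaging kills every character not contained in $R$. Hence $\widehat{f_R}(S)=\widehat f(S)$ for $S\subseteq R$, and $\widehat{f}(\emptyset)=1$. This gives
\[
f_R-1=\sum_{\ell\ge1} g_{R,\ell}, \qquad g_{R,\ell}:=\sum_{S\subseteq R,\,|S|=\ell}\widehat f(S)\chi_S .
\]

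\emph{Step 2: hypercontractivity and averaging over $R$.} Each $g_{R,\ell}$ is homogeneous of degree $\ell$, so the standard $(2,t)$-hypercontractive bound gives $\norm{g_{R,\ell}}_t\le (t-1)^{\ell/2}\norm{g_{R,\ell}}_2$. Set $a_\ell=(t-1)^{\ell/2}\norm{g_{R,\ell}}_2$. By the triangle inequality and Cauchy--Schwarz with weights $2^{-\ell}$,
\[
\norm{f_R-1}_t^2\le\Bigl(\sum_\ell a_\ell\Bigr)^2\le \sum_\ell 2^\ell a_\ell^2 .
\]
Taking expectation over $R$ and using $r$-spreadness, namely $\Pr[S\subseteq R]\le r^{|S|}$, yields
\[
\mathbb{E}_R\norm{f_R-1}_t^2\le\sum_{\ell\ge1}\bigl(2r(t-1)\bigr)^\ell\, W^\ell[f], \qquad W^\ell[f]:=\sum_{|S|=\ell}\widehat f(S)^2 .
\]

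\emph{Step 3: bounding the level weights.} This is the main obstacle, and it is where the min-entropy hypothesis enters. Since $D_\infty(\bm X)\le k$, we have $0\le f\le 2^k$. Thus $g=2^{-k}f$ is $[0,1]$-valued with mean $\alpha=2^{-k}$. The level-$\ell$ inequality gives $W^\ell[g]\le\alpha^2\bigl(\tfrac{2e}{\ell}\ln(1/\alpha)\bigr)^\ell$ for $\ell\le 2\ln(1/\alpha)$, that is,
\[
W^\ell[f]\le (C_0k/\ell)^\ell \qquad \text{for } \ell\le 2k\ln 2 .
\]
For larger $\ell$ I will use the crude bound $W^\ell[f]\le\norm f_2^2\le \norm f_\infty\,\mathbb{E} f\le 2^k$.

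\emph{Step 4: summing.} In the low range, the $\ell$-th term is at most $(2C_0\,rk(t-1)/\ell)^\ell\le (2C_0\epsilon)^{\ell-1}\cdot 2C_0\,rk(t-1)$. Once $\epsilon$ is small this is a geometric series dominated by its $\ell=1$ term, so it contributes $O(rk(t-1))$.

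In the high range $\ell>2k\ln 2$, the term is at most $2^k(2\epsilon/k)^\ell$. Since $\ell\ge k$, this is at most $(2\epsilon)^{\ell/2}\cdot 2^k\cdot k^{-\ell}\cdot(\ldots)$, and I need it to be $O(rk(t-1))$, not merely small. To arrange this I will keep one factor of $2r(t-1)$ aside, i.e. bound the term by
\[
2r(t-1)\cdot 2^k\,(2\epsilon/k)^{\ell-1}.
\]
Because $\ell-1\ge 2k\ln2-1\gtrsim k$, the factor $2^k(2\epsilon)^{\ell-1}$ is summable and bounded by an absolute constant once $\epsilon$ is small (for $k$ bounded the low range is tiny, and in that case I just treat all $\ell\ge1$ this way). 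This gives a total of $O(r(t-1))\le O(rk(t-1))$. I expect some care is needed here in choosing $\epsilon$ so that $2^k(2\epsilon)^{ck}$ stays bounded uniformly in $k$, but that is a routine choice of constants.

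Combining Steps 2--4 gives $\mathbb{E}_R\norm{f_R-1}_t^2\le C\,rk(t-1)$, and the Markov step above finishes the proof.
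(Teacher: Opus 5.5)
Your proof is correct and follows the same route as the paper. Both arguments bound $\mathbb{E}_R\norm{f_R-1}_t^2$ via hypercontractivity, average over $R$ using spreadness, control the level weights with the level-$\ell$ inequality, sum a geometric series, and finish with Markov.

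The differences are cosmetic. You apply hypercontractivity degree by degree and recombine with Cauchy--Schwarz, losing a harmless factor $2^\ell$, where the paper uses the direct bound $\norm{g}_t^2\le\sum_S (t-1)^{|S|}\widehat{g}(S)^2$. You also derive the level bound from the standard level-$k$ inequality for $2^{-k}f$ together with the trivial estimate $W^\ell[f]\le 2^k$ for large $\ell$, whereas the paper cites the bound $(Ck)^\ell$ for all $\ell$.
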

Note that the constant function $1$ on $\B^R$ is the renormalized probability mass function of the uniform distribution on $\B^R$, so this lemma is giving a bound on the deviation of $\bm{X}_R$ from the uniform distribution. To prove this lemma we need two standard consequences of the hypercontractivity theorem of Bonami \cite{Bonami1970} (see \cite{ODonnell2014}).
\begin{lemma}[$L_2 \to L_t$ Inequality]
For any $f: \B^m \to \mathbb{R}$, $t \geq 2$,
\[
\norm{f}_t^2 \leq \sum_{S \subseteq [m]} (t-1)^{|S|} \widehat{f}(S)^2
\]
\end{lemma}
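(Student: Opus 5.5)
The inequality is a direct consequence of Bonami's hypercontractivity theorem, so I would derive it from the noise operator rather than argue from scratch. For $\rho\in[0,1]$ let $T_\rho$ be the noise operator on $\B^m$, acting diagonally in the Fourier basis by $\widehat{T_\rho g}(S)=\rho^{|S|}\widehat{g}(S)$. The $(2,t)$-hypercontractivity theorem \cite{Bonami1970,ODonnell2014} states that for $t\ge 2$ and $\rho\le 1/\sqrt{t-1}$,
\[
\norm{T_\rho g}_t\le \norm{g}_2 \quad \text{for every } g:\B^m\to\mathbb{R}.
\]

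First, given $f$, I define an auxiliary function $g$ by
\[
\widehat{g}(S)=(t-1)^{|S|/2}\widehat{f}(S).
\]
This is legitimate because the characters form a basis, so any choice of coefficients defines a function. With $\rho=(t-1)^{-1/2}\le 1$, we get $T_\rho g=f$ coefficient by coefficient.

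Next, I apply the hypercontractive inequality and Parseval:
\[
\norm{f}_t^2=\norm{T_\rho g}_t^2\le\norm{g}_2^2=\sum_S \widehat{g}(S)^2=\sum_S (t-1)^{|S|}\widehat{f}(S)^2.
\]
This is exactly the claimed inequality.

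The only genuine content is the hypercontractivity theorem itself, which the paper treats as standard. If a self-contained argument were wanted, I would prove the $(2,t)$ bound by the usual route. For $m=1$ it is a two-point inequality, which can be verified by calculus or by reducing to $t=4$ via an even-moment expansion. For general $m$ it follows by induction on $m$, using Minkowski's integral inequality to tensorize. That step is the main obstacle, but here it can be cited as a black box.
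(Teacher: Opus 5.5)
Your derivation is correct and matches the paper's approach: you define $g$ with $\widehat{g}(S)=(t-1)^{|S|/2}\widehat{f}(S)$, note that $T_\rho g=f$ for $\rho=(t-1)^{-1/2}$, and apply $(2,t)$-hypercontractivity together with Parseval, which is the standard consequence of Bonami's theorem that the paper simply cites \cite{Bonami1970,ODonnell2014} without proof. One small point concerns only your optional aside: for general real $t$, the sharp two-point inequality does not reduce to the case $t=4$, but is proved directly, for example by a binomial-series expansion as in \cite{ODonnell2014}; this does not affect your argument, since you use the theorem as a black box.
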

\begin{lemma}[Level-$\ell$ Inequality]
Let $k \geq 1$, and say that $\bm{X}$ is a random variable in $\B^n$ with $D_\infty(\bm{X}) \leq k$. For $S \subseteq [n]$, let $b_{\bm{X}}(S) := \mathbb{E}[ \chi_S(\bm{X})]$ be the bias of the $S^{th}$ Fourier character on $\bm{X}$. Then for any $\ell \in \{ 1,\ldots, n\}$,
\[
\sum_{|S| = \ell} b_{\bm{X}}(S)^2 \leq (C k )^{\ell}
\]
where $C$ is a universal constant.
\end{lemma}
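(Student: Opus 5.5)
The plan is the standard hypercontractive proof of the level-$\ell$ inequality, via Hölder duality. Let $f(x) = 2^n\Pr[\bm{X}=x]$ be the renormalized mass function, as in the uniformity lemma. Then $\widehat{f}(S) = \mathbb{E}_{x\sim\B^n} f(x)\chi_S(x) = \sum_x \Pr[\bm{X}=x]\chi_S(x) = b_{\bm{X}}(S)$. So the goal is to bound the level-$\ell$ Fourier weight $W := \sum_{|S|=\ell}\widehat{f}(S)^2$. The hypothesis $D_\infty(\bm{X})\le k$ says exactly that $0\le f\le 2^k$, and by definition $\mathbb{E} f = 1$.

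\textbf{Hölder step.} Let $g := \sum_{|S|=\ell}\widehat{f}(S)\chi_S$ be the degree-$\ell$ part of $f$. By Parseval, $W = \mathbb{E}[fg]$ and $\norm{g}_2^2 = W$. For any $q\ge 2$ with dual exponent $q' = q/(q-1)$, Hölder gives $W \le \norm{f}_{q'}\norm{g}_q$.
\begin{itemize}
\item Since $g$ is homogeneous of degree $\ell$, the $L_2\to L_t$ inequality stated just above gives $\norm{g}_q \le (q-1)^{\ell/2}\norm{g}_2 = (q-1)^{\ell/2}W^{1/2}$.
\item Since $0\le f\le 2^k$, we have $\mathbb{E} f^{q'} \le 2^{k(q'-1)}\,\mathbb{E} f = 2^{k/(q-1)}$, hence $\norm{f}_{q'} \le 2^{k/q}$.
\end{itemize}
Dividing by $W^{1/2}$ and squaring yields
\[
W \le (q-1)^{\ell}\, 2^{2k/q}.
\]

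\textbf{Choice of $q$.} We split into two cases according to whether $\ell\le 2k$.
\begin{itemize}
\item If $\ell \le 2k$, take $q = 1 + 2k/\ell \ge 2$. Then $(q-1)^\ell = (2k/\ell)^\ell$ and $2^{2k/q} \le 2^{\ell}$, since $2k/q = 2k\ell/(\ell+2k) \le \ell$. Thus $W \le (4k/\ell)^\ell \le (4k)^\ell$.
\item If $\ell > 2k$, take $q=2$. This gives $W\le 2^{k}\le 2^{\ell}$, using $k\ge 1$.
\end{itemize}
In both cases $W\le (Ck)^\ell$ with $C=4$.

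\textbf{Expected obstacle.} The only delicate point is the parameter choice: $q$ must scale like $k/\ell$ so that the $2^{2k/q}$ factor is absorbed into a $C^\ell$ term. A naive choice such as $q=2$ throughout would only give $2^k$, which is too weak when $\ell \ll k$. The rest is bookkeeping.
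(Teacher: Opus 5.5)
Your proof is correct: the H\"older step $W=\mathbb{E}[fg]\le \norm{f}_{q'}\norm{g}_q$, the bound $\norm{f}_{q'}\le 2^{k/q}$ from $0\le f\le 2^k$, and the choice $q=1+2k/\ell$ (or $q=2$ when $\ell>2k$) all check out and give $C=4$, in fact the sharper bound $(4k/\ell)^\ell$ when $\ell\le 2k$. The paper gives no proof of its own and cites this as a standard consequence of Bonami's hypercontractivity theorem (via O'Donnell's book); your argument is exactly that standard derivation.
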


\begin{proof}[Proof of the Uniformity Lemma]
We will show
\[
\mathbb{E}_{R  \sim\mathcal{R}} \norm{f_R -1}_t^2 \leq 2Crk(t-1)
\]
where $C$ is the constant from the Level-$\ell$ inequality which will yield the lemma via Markov. Observe that for any $S \subseteq R$, $\widehat{f_R - 1}(S) = b_{\bm{X}}(S)$ when $S \neq \emptyset$, and $\widehat{f_R - 1}(\emptyset) = 0$. Applying first the $L_2 \to L_t$ inequality, then the $r$-spreadness condition, then the Level-$\ell$ inequality (with basic rearrangements in between) we have:
\begin{gather*}
\mathbb{E}_{R \sim\mathcal{R}} \norm{f_R -1}_t^2 \leq \mathbb{E}_{R} \sum_{S \subseteq R} (t- 1)^{|S|} \widehat{f_R - 1}(S)^2 = \mathbb{E}_{R} \sum_{ \emptyset \neq S \subseteq R} (t- 1)^{|S|} b_{\bm{X}}(S)^2 \\
= \sum_{ \emptyset \neq S \subseteq [n]} \Pr_R[R \supseteq S] \cdot (t- 1)^{|S|} b_{\bm{X}}(S)^2 \leq \sum_{ \emptyset \neq S \subseteq [n]} r^{|S|}(t-1)^{|S|} b_{\bm{X}}(S)^2\\
= \sum_{\ell=1}^n (r (t-1))^\ell \sum_{S \subseteq [n], |S|=\ell} b_{\bm{X}}(S)^2 
\leq \sum_{\ell = 1}^\infty (Crk(t-1))^\ell \leq 2C rk(t-1)
\end{gather*}
where in the final inequality we require $Crk(t-1) \leq \frac{1}{2}$, which follows from our assumption that $rk(t-1)$ is sufficiently small.
\end{proof}

We can now prove Lemma~\ref{lem:light-patterns}. We will state here the strengthening to $r$-spread sets for the sake of completeness, although we only need the case $R\subseteq_r[n]$ in our main results.
\begin{lemma*}[Light Patterns (Lemma~\ref{lem:light-patterns}), Generalized to Spread Sets]
Let $ k \geq 1, r \in [0,1]$, let $\mathcal{R}$ be an $r$-spread distribution on $2^{[n]}$, and let $\bm{X}$ be a random variable in $\B^n$ with $D_\infty(\bm{X}) \leq k$. There is a universal constant $C$ such that the following holds whenever $r \leq (Ck)^{-2}$: with probability $\geq \frac{63}{64}$ over $R \sim \mathcal{R}$ we have:
\[
|\{ z \in \B^R \mid \Pr[\bm{X}_R = z] \leq 2^{- |R| - 1}\}| \leq 2^{|R|- k}
\]
\end{lemma*}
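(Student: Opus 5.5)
We derive this from the Uniformity Lemma (Lemma~\ref{lem:uniformity}) with a Markov/Chebyshev-type count in $L_t$, choosing $t$ of order $k$. Fix $R$ and set $g = f_R - 1$ on $\B^R$. A pattern $z$ with $\Pr[\bm{X}_R = z] \leq 2^{-|R|-1}$ is exactly one with $f_R(z) \leq \frac12$, so $|g(z)| \geq \frac12$. Hence the fraction of light patterns is
\[
\Pr_{z \sim \B^R}\bigl[|g(z)| \geq \tfrac12\bigr] \leq 2^t \norm{g}_t^t .
\]
It therefore suffices to show that, with probability $\geq \frac{63}{64}$ over $R$, we have $2^t\norm{g}_t^t \leq 2^{-k}$. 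Equivalently, we need $\norm{g}_t \leq \frac12 2^{-k/t}$.

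Take $t = k+1$ (so $t \geq 2$ and $t-1 = k$). Then $2^{-k/t} \geq \frac12$, and it suffices to have $\norm{f_R - 1}_t \leq \frac14$. Apply the Uniformity Lemma with $\eta = \frac1{64}$. Writing $C_1,\epsilon_1$ for its constants, we get
\[
\norm{f_R - 1}_t \leq \sqrt{64\, C_1 r k (t-1)} = \sqrt{64\, C_1 r k^2}
\]
except with probability $\leq \frac1{64}$ over $R \sim \mathcal{R}$. This bound is at most $\frac14$ as soon as $r k^2 \leq (1024\, C_1)^{-1}$. The lemma's side condition $rk(t-1) = rk^2 < \epsilon_1$ also holds if $r \leq (Ck)^{-2}$ for a sufficiently large universal $C$, namely $C^2 \geq \max(1024\, C_1, 1/\epsilon_1)$.

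Putting this together, with probability $\geq \frac{63}{64}$ the fraction of light patterns is at most
\[
2^{k+1}\Bigl(\tfrac14\Bigr)^{k+1} = 2^{-(k+1)} \leq 2^{-k},
\]
so their number is at most $2^{|R|-k}$, which is the claim.

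There is no real obstacle beyond choosing $t$. The one point to check is that $t$ must scale linearly with $k$ to get a $2^{-k}$ tail from an $L_t$ bound. That choice is exactly what forces the product $rk(t-1) \sim rk^2$ in the Uniformity Lemma, and hence the hypothesis $r \leq (Ck)^{-2}$. For non-integer $k$, taking $t = k+1$ is still valid since the lemma allows real $t \geq 2$.
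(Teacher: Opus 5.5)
Your proof is correct and is essentially the paper's argument: the paper also controls the fraction of light patterns through $\norm{f_R-1}_t$ with $t$ of order $k$, then invokes the Uniformity Lemma with $\eta=\frac{1}{64}$, which is exactly where the $rk^2$ requirement comes from. The differences are cosmetic: the paper takes $t=\max\{k,2\}$ and argues contrapositively that $|Z_R|\geq 2^{|R|-k}$ forces $\norm{f_R-1}_t\geq \frac14$. One small nit: the Uniformity Lemma's side condition is strict ($rk(t-1)<\epsilon$), so you should take $C^2>1/\epsilon_1$ rather than $C^2\geq 1/\epsilon_1$.
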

\begin{proof}
For a fixed $R$, let $Z_R = \{ z \in \B^R \mid \Pr[\bm{X}_R = z] \leq 2^{- |R| - 1}\}$. Set $t = \max\{ k, 2\}$. Then $|f_R(z) - 1|^t \geq 2^{-t}$ for all $z \in Z_R$. Say that $|Z_R| \geq 2^{|R| - k}$. Then we have
\[
\norm{f_R - 1}_t \geq  (2^{-t} \Pr_{z \sim \B^R} [ z \in Z_R])^{1/t} \geq 2^{ -\frac{k}{t} - 1} \geq \frac{1}{4}
\]
Now, applying the uniformity lemma, we have
$\Pr_{\mathcal{R}} [ |Z_R| \geq 2^{|R| - k}] \leq \frac{1}{64}$
provided that $\sqrt{Crk(t-1)} < \frac{1}{4}$ for some universal constant $C$. This holds provided $r \leq (C'k)^{-2}$ for some universal constant $C'$.
\end{proof}

\section{A Nearly Tight Lower Bound}\label{sec:nearly-tight}
Finally in this section we prove:
\begin{theorem}
\label{thm:nearly-tight-parity-game}
For each $d\geq 2$ there is $\epsilon_d>0$ such that
$\mathrm{KW}(\mathrm{Parity}_n)$ does not have a $d$-round communication protocol of cost $m$ when $m\leq \epsilon_d n^{1/(d-1)}$, provided $n$ is sufficiently large. Consequently any depth $d$ circuit computing $\mathrm{Parity}_n$ must have $>2^{\epsilon_d n^{1/(d-1)}}$ wires.
\end{theorem}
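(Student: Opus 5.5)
The plan is to run exactly the adversary argument from Theorem~\ref{thm:strong-parity-game}, with the strong mirror set lemma replaced by an optimal version in which $q = Ckp$ rather than $Ck^2p$. Everything in the proof of Lemma~\ref{lem:strong-mirror} goes through verbatim except the invocation of the light patterns lemma. Those steps are the $k$-guiding function $G$, the min-entropy bound $H_\infty(\bm{Y}^G)\geq n-2k$, the reduction of (\ref{eq:1}) to (\ref{eq:2}), the conditioning on $P\cup Q = S$ so that $R = P\setminus Q \subseteq_r S$ with $r\leq p/q$, and the entropy lemma with a Markov argument to make $(x,S)$ good. So the first step is to isolate the following strengthened light patterns lemma. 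If $D_\infty(\bm{X})\leq k$ and $r \leq (Ck)^{-1}$, then with probability $\geq \frac{63}{64}$ over $R\subseteq_r[n]$, the set $Z_R$ of patterns $z$ with $\Pr[\bm{X}_R = z]\leq 2^{-|R|-1}$ has $|Z_R|\leq 2^{|R|-k}$. Given this, the strong mirror set lemma holds with $q = C'kp$, with deficit $2k+2$ and limit parameter $65k$ as before.

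With that lemma, the parameter bookkeeping is as in the sketch of Theorem~\ref{thm:strong-parity-game}. Set $k_i = 65^i m$ (up to additive constants) and $p_0 = p_1 = 2/n$, using that the initial rectangle satisfies the limit condition on both sides so the first round costs nothing. For $i\geq 1$, set $p_{i+1} = C' k_i p_i$. Then $p_d \leq O_d(m^{d-1}/n)\leq \frac14$ once $m\leq \epsilon_d n^{1/(d-1)}$, and the terminal case is argued exactly as in Theorem~\ref{thm:parity-game}: a coordinate lies in $P$ with probability $\leq \frac14$, so for each $i$ some $y\in Y_d$ agrees with a given $x$ at $i$. The circuit consequence then follows from Lemma~\ref{lem:corresp}.

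The main obstacle is proving the light patterns lemma in the linear regime $rk\ll 1$. The Fourier/hypercontractive route in the uniformity lemma inherently loses a square, because it forces $t\approx k$ in the $L_t$ norm. Instead I would mimic the proof of the shattering lemma. First, reduce to a monotone statement: the condition ``$R$ has at most $2^{|R|-k}$ light patterns'' should behave well under taking subsets, since projecting a distribution onto $R'\subseteq R$ averages the pattern masses. I would set up a functional version via the decreasing function $\phi(R) = $ a weighted count of light patterns, or better a ``harmonic-mean'' style quantity $\mathbb{E}_{z\sim\B^R} \bigl(2^{|R|}\Pr[\bm{X}_R=z]\bigr)^{-1}$ suitably truncated. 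Harmonic means are convex-friendly under averaging, so this quantity should be monotone under subsets.

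Second, establish a Pajor-type base bound at density $1/2$. For $R\subseteq_{1/2}[n]$, an inductive argument on coordinates, analogous to Pajor's lemma, should show that the expected log of this harmonic quantity is $O(k)$. Third, transfer from $r=1/2$ down to small $r$ by the union-of-$h$-copies trick from the shattering lemma: the union of $h\approx 1/(2r)$ independent $r$-sets is a $\theta$-set with $\theta\leq \frac12$, and an analytic analogue of $\Pr[\bigcup R_i\in\mathcal{A}]\leq \prod\Pr[R_i\in\mathcal{A}]$ (submultiplicativity of the harmonic-mean transform over disjoint-ish unions) would give failure probability $O(rk)$. I expect this submultiplicativity/transference step to be the hardest part, and I would try first to prove it by conditioning on the coordinates of one $R_i$ at a time and using the entropy lemma to control the conditional deficits.
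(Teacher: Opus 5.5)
\textbf{The lemma you isolate is false, so the plan cannot go through as stated.} Your reduction to a light-patterns lemma in the linear regime $r\le (Ck)^{-1}$ is right, and so is your bookkeeping ($p_0=p_1=2/n$, $p_{i+1}=O(k_ip_i)$, $k_i=O_d(m)$). The problem is that you keep the light threshold at $2^{-|R|-1}$, and with that threshold no monotonicity, induction or transference argument can prove the statement.

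\emph{Counterexample.} Take $X=\{x\in\B^n : |x|\ge n/2+s\sqrt n/2\}$. Choose $s$ so that $\bar\Phi(s)\approx 2^{-k}$, where $\bar\Phi$ is the standard Gaussian tail; then $s=\Theta(\sqrt k)$. By symmetry $f_R(z)$ depends only on $|z|$. Write $|z|=|R|/2+a\sqrt{|R|}/2$ and $|R|\approx rn$. The CLT together with $\bar\Phi(s+\delta)\le e^{-s\delta}\bar\Phi(s)$ gives
\[
f_R(z)\approx \frac{\bar\Phi\bigl((s-\sqrt r\,a)/\sqrt{1-r}\bigr)}{\bar\Phi(s)}\le e^{-s\sqrt r\,|a|}\qquad (a<0).
\]
With $r=(Ck)^{-1}$ we have $s\sqrt r\approx\sqrt{2\ln 2/C}$. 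So for large $k$, every $z$ with $a\le-\sqrt C$ has $f_R(z)<1/2$. These $z$ form a fraction about $\bar\Phi(\sqrt C)=2^{-O(C)}$ of $\B^R$, which does not depend on $k$. This is far more than $2^{-k}$ once $k\gg C$, and it holds for essentially every $R$. The same computation shows that, for threshold $1/2$, the quadratic regime $r\le (Ck)^{-2}$ of Lemma~\ref{lem:light-patterns} is tight.

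Your proposed harmonic-mean route could not give your statement either. Since $\mathbb{E}_z f_R(z)^{-1}\ge 1/\mathbb{E}_z f_R(z)=1$, Markov applied to it only bounds the number of patterns with $f_R(z)\le 2^{-k}/\mathbb{E}_zf_R(z)^{-1}$. It can never bound those with $f_R(z)\le 1/2$.

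\textbf{The missing idea: the mirror set lemma does not need threshold $1/2$.} Suppose light means mass $\le 2^{-|R|-2K-2}$ at deficit parameter $K=64k$. Then the fraction of $Q$-patterns leading into $X$ is still at least $2^{-K}\cdot 2^{-2K-2}=2^{-O(k)}$. Only the constant in the limit parameter $65k$ changes, and your bookkeeping absorbs that. This is exactly Lemma~\ref{lem:improved-light-patterns}: the paper shows $\mathbb{E}_zf_R(z)^{-1}\le 4\cdot 2^k$ with probability $\ge 63/64$ over $R$, then applies Markov.

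\textbf{The difficulty also sits elsewhere than you expect.}
\begin{itemize}
\item The transfer needs no submultiplicativity. The variational formula $\harm{f}(R)=\inf_w\mathbb{E}_z[f_R(z)w(z)^2]$ shows $\harm{f}$ is decreasing under inclusion. So $\{R:\harm{f}(R)\ge 2^{-k-2}\}$ is downward closed, and the standard transfer principle applies verbatim.
\item The real work is the base estimate $\mathbb{E}_{R\subseteq_{1/4}[n]}\sqrt{\harm{f}(R)}\ge\mathbb{E}_x\sqrt{f(x)}\ge 2^{-k/2}$. It is proved by induction on $n$ from the two-point inequality $\frac34\sqrt{(a+b)/2}+\frac14\sqrt{2ab/(a+b)}\ge(\sqrt a+\sqrt b)/2$ and concavity of $f\mapsto\harm{f}$. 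This inequality fails at density $1/2$ (take $b=0$), which is why the paper works at density $1/4$.
\item Combined with $\harm{f}\le 1$, it gives $\Pr_{R\subseteq_{1/4}[n]}[\harm{f}(R)\ge 2^{-k-2}]\ge 2^{-k/2-1}$. A probability bound of this kind is what the base density must supply. A bound on the expected logarithm of $\mathbb{E}_zf_R(z)^{-1}$ does not work: without truncation it is already $+\infty$ when $\bm{X}$ is uniform on a subcube with one fixed coordinate.
\end{itemize}
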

Theorem~\ref{thm:nearly-tight-parity-game} will be a consequence of the following improvement to Lemma~\ref{lem:light-patterns}:
\begin{lemma}[Improved Light Patterns]\label{lem:improved-light-patterns}
Let $k \geq 1$, $0 \leq r \leq (512k)^{-1}$, and let $\bm{X}$ be a random variable in $\B^n$, $D_\infty(\bm{X}) \leq k$. With probability $\geq \frac{63}{64}$ over $R \subseteq_r [n]$, we have
\[
| \{ z \in \B^R \mid \Pr[\bm{X}_R = z] \leq 2^{-|R|-2k-2} \}| \leq 2^{|R|-k}
\]
\end{lemma}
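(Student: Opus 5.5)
The plan is to turn ``many light patterns'' into a statement about a single monotone scalar $h(R)$ defined for each $R$, and then reuse the transference argument for downward-closed families from the proof of the Shattering Lemma (Lemma~\ref{lem:shattering}). As in the Uniformity Lemma, write $f_R(z)=2^{|R|}\Pr[\bm{X}_R=z]$. The scalar is the harmonic mean
\[
h(R) := \Bigl(\mathbb{E}_{z\sim\B^R} f_R(z)^{-1}\Bigr)^{-1},
\]
with $h(R)=0$ when $\bm{X}_R$ does not have full support. This is the ``harmonic mean transform'' mentioned in the roadmap.

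The first step is three elementary observations.
\begin{enumerate}[(a)]
\item If $|\{z: \Pr[\bm{X}_R=z]\le 2^{-|R|-2k-2}\}| > 2^{|R|-k}$, then $\mathbb{E}_z f_R^{-1} > 2^{-k}\cdot 2^{2k+2}$, so $h(R)<2^{-k-2}$. The bad event is therefore contained in $\{h(R)<2^{-k-2}\}$.
\item Since $f_R(z)$ is the average of $f_{R'}$ over the extensions of $z$ to any $R'\supseteq R$, and $u\mapsto 1/u$ is convex, we get $\mathbb{E}_z f_R^{-1}\le \mathbb{E}_{z'} f_{R'}^{-1}$. Hence $h$ is decreasing in $R$, and $\mathcal{A}:=\{R: h(R)\ge 2^{-k-2}\}$ is downward closed.
\item By Jensen, $h(R)\le 1/\mathbb{E}_z f_R = 1$.
\end{enumerate}

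The second step is transference. By (b) I can invoke the inequality $\Pr_{R\subseteq_r[n]}[R\in\mathcal{A}]\ge \Pr_{R\subseteq_{1/2}[n]}[R\in\mathcal{A}]^{4r}$ verbatim from the proof of Lemma~\ref{lem:shattering}. It then suffices to show $\Pr_{R\subseteq_{1/2}[n]}[R\in\mathcal{A}]\ge 2^{-k-1}$, since this gives failure probability at most $1-2^{-4r(k+1)}\le 4r(k+1)\le 8rk\le 1/64$ when $r\le(512k)^{-1}$. By (c) and Markov-type reasoning, it is in turn enough to show
\[
\mathbb{E}_{R\subseteq_{1/2}[n]} h(R) \;\ge\; 2^{-k}\cdot \tfrac34 ,
\]
because then $\Pr[h\ge 2^{-k-2}]\ge \mathbb{E}h - 2^{-k-2}\ge 2^{-k-1}$.

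The third step, which I expect to be the main obstacle, is the analytic Pajor-type inequality $\Psi(f):=\mathbb{E}_{R\subseteq_{1/2}[n]} h_f(R)\ge 1/\norm{f}_\infty$ (or some constant multiple of it). Since $\norm{f}_\infty\le 2^k$, this is what is needed. I would prove it by induction on $n$, mimicking the inductive proof of Pajor's lemma. Split on coordinate $n$, letting $f^0,f^1$ be the (renormalized) restrictions to $x_n=0,1$.
\begin{itemize}
\item For $R\not\ni n$, $f_R=(f^0_R+f^1_R)/2$. Since the harmonic mean of linear functions is concave, $h_f(R)\ge \tfrac12(h_{f^0}(R)+h_{f^1}(R))$.
\item For $R=R'\cup\{n\}$, the reciprocal $1/h_f(R)$ is exactly the average of $1/h_{f^0}(R')$ and $1/h_{f^1}(R')$, so $h_f(R)$ is the harmonic mean of $h_{f^0}(R')$ and $h_{f^1}(R')$.
\end{itemize}
One must then show that $\tfrac12[\mathrm{AM}+\mathrm{HM}]$, averaged over $R'$, dominates what the induction hypothesis requires. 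One also has to track the normalization: if $\bm{X}$ gives mass $\alpha$ to $x_n=0$, then $\norm{f^0}_\infty\le \norm{f}_\infty\cdot 2\alpha$. With that normalization, the pointwise inequality $\tfrac12\bigl(\tfrac{a+b}{2}+\tfrac{2ab}{a+b}\bigr)\ge$ (the corresponding combination of $1/\norm{f^0}_\infty$, $1/\norm{f^1}_\infty$) should close the induction, in the same way that $|X|\le \mathrm{sh}(X_0)+\mathrm{sh}(X_1)$ closes Pajor's induction.

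If the clean bound $1/\norm{f}_\infty$ does not propagate, my fallback is to prove a weaker $\Psi(f)\ge 2^{-O(k)}$. Only the exponent's constant matters, and it can be absorbed by adjusting the $512$ and the threshold $2k+2$.
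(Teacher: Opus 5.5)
Your observations (a)--(c), the transference step and the final Markov step are all correct. The gap is Step 3, and it is not a matter of constants. The clean form fails already for $n=2$ with $X=\{00,01,10\}$: there $h(\emptyset)=1$, $h(\{1\})=h(\{2\})=8/9$ and $h(\{1,2\})=0$. So $\Psi(f)=25/36<3/4=1/\norm{f}_\infty$. For $X$ the Hamming ball of radius $1$ one gets $\Psi(f)\norm{f}_\infty<5/(n+1)$, so no constant multiple works either.

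More seriously, your fallback $\Psi(f)\ge 2^{-O(k)}$ is also false, which means the density $1/2$ in your transference is fatal. Let $\bm X$ be uniform on $\{x:|x|\le n/2-\sqrt n\}$, so $k=O(1)$. Take $|R|=n/2+O(\sqrt n)$ and $z\in\B^R$ with $|z|=|R|/2+y\sqrt{n/8}$, where $y$ is approximately standard normal for $z\sim\B^R$. A direct binomial computation gives $f_R(z)\approx\Phi(-2\sqrt2-y)/\Phi(-2)$. Since $\mathbb{E}_y[1/\Phi(-2\sqrt2-y)]=\infty$, the rare patterns with many ones dominate $\mathbb{E}_z f_R^{-1}$. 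Hence $h(R)\to0$ uniformly on such $R$, and $\Psi(f)\to 0$ while $k$ stays bounded. At density $1/4$ the same computation gives $h(R)=2^{-O(k)}$: once $|R|$ is a constant fraction below $n/2$, the fraction of many-ones patterns decays faster than $1/f_R$ grows on them.

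Your induction scheme also cannot close as set up. Because the harmonic mean is concave, you cannot first average the inductive hypotheses over $R'$ and then combine them. What you need is a pointwise two-point inequality $(1-\rho)\,\phi(\mathrm{AM}(a,b))+\rho\,\phi(\mathrm{HM}(a,b))\ge\frac12(\phi(a)+\phi(b))$. This propagates to $\mathbb{E}_{R\subseteq_\rho[n]}\phi(h(R))\ge\mathbb{E}_x\phi(f(x))$. With $\phi$ linear it fails for every $\rho>0$, since $\mathrm{HM}<\mathrm{AM}$ off the diagonal. A target like $1/\norm{f}_\infty$ is not of this form at all. Your normalization is also inverted: conditioning on $x_n=0$ gives $\norm{f^0}_\infty\le\norm{f}_\infty/(2\alpha)$, not $\norm{f}_\infty\cdot 2\alpha$.

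The missing idea is the paper's choice $\phi=\sqrt{\cdot}$ and $\rho=1/4$. The inequality $\frac34\sqrt{(a+b)/2}+\frac14\sqrt{2ab/(a+b)}\ge(\sqrt a+\sqrt b)/2$ holds. It fails for any weight above $1/4$ on the harmonic mean, in particular at $1/2$, consistent with the counterexample above. Combined with exactly your coordinate split (concavity when $n\notin R$, the harmonic-mean identity when $n\in R$), it gives
\[
\mathbb{E}_{R\subseteq_{1/4}[n]}\sqrt{h(R)}\ \ge\ \mathbb{E}_x\sqrt{f(x)}\ \ge\ 2^{-k/2},
\]
where the last step uses $\sqrt f\ge f/\sqrt{\norm{f}_\infty}$. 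Markov's inequality with $h\le 1$ then gives $\Pr_{R\subseteq_{1/4}}[h(R)\ge 2^{-k-2}]\ge 2^{-k/2-1}$. Transference from density $1/4$ (exponent $5r$) bounds the failure probability by $5r(k/2+1)\le 1/64$.
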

Formally this is incomparable to Lemma~\ref{lem:light-patterns}; the important sense in which it is stronger is the weakened requirement $r \leq (Ck)^{-1}$ for a universal constant $C$, whereas Lemma~\ref{lem:light-patterns} required $r \leq (Ck)^{-2}$. We do however have to pay for this by restricting quantitatively the definition of a ``light pattern'' (the kinds of patterns which we aim to bound the number of): in Lemma~\ref{lem:light-patterns} these were patterns whose probabilities decayed by a factor $\leq \frac{1}{2}$ compared to uniform, whereas here they are patterns whose probabilities decayed by $\leq 2^{-2k-2}$. Fortunately, using this more restrictive definition of a light pattern comes at no cost to us in our application to Theorem~\ref{thm:nearly-tight-parity-game}.
Theorem~\ref{thm:nearly-tight-parity-game} is derived from Lemma~\ref{lem:improved-light-patterns} in exactly the same way that Theorem~\ref{thm:strong-parity-game} is derived from Lemma~\ref{lem:light-patterns} (except we now achieve $q= O(kp)$ in our mirror set lemma) and so we will not repeat the argument here. The remainder of this section will be dedicated to proving Lemma~\ref{lem:improved-light-patterns}. 

Let $f_R: \B^R \to \mathbb{R}$ be the renormalized distribution function of $\bm{X}_R$ as in Section~\ref{sec:new-lb}, given by $f_R(z) = 2^{|R|} \Pr[\bm{X}_R = z]$. Our approach will be to show that with probability at least $63/64$ over $R \subseteq_r [n]$,
\begin{gather}\label{eq:harm-bound}
\mathbb{E}_{z \sim \B^R} f_R(z)^{-1} \leq 4 \cdot 2^k
\end{gather}
with the convention $0^{-1} = +\infty$. This immediately yields the bound in Lemma~\ref{lem:improved-light-patterns} by Markov. We will derive this from some general analytic inequalities involving a transform on functions over the hypercube which we will call the ``harmonic mean transform:''
\begin{definition}[Harmonic Mean Transform]
For a function $f: \B^n \to \mathbb{R}_{\geq 0}$ and any $R \subseteq [n]$, define $f_R: \B^R \to \mathbb{R}$ by
\[
f_R(z) = \mathbb{E}_{x\sim \B^n} [f(x) \mid x_R = z]
\]
We then define $\harm{f}: 2^{[n]} \to \mathbb{R}_{\geq 0}$, the ``harmonic mean transform of $f$,''  by
\[
\harm{f}(R) = \Bigl(\mathbb{E}_{z \sim \B^R} f_R(z)^{-1} \Bigr)^{-1}
\]
where we use the conventions $0^{-1} = +\infty$, $(+\infty)^{-1} = 0$.
\end{definition}
Our desired bound in (\ref{eq:harm-bound}) can be immediately recast in terms of the harmonic mean transform of the renormalized distribution function of $\bm{X}$: for $f(x)=2^n\Pr[\bm{X}=x]$, bound~\eqref{eq:harm-bound} becomes $\harm{f}(R)\geq 2^{-k-2}$. In some very rough sense this transform will play a role similar to that of the Fourier transform in the proof of the uniformity lemma in Section~\ref{sec:new-lb}. To be precise, we believe the most accurate analogy is between $\harm{f}(R)$ and the quantity 
$\sum_{\emptyset \neq S \subseteq R} (t-1)^{|S|} \hat{f}(S)^2$ from the uniformity lemma's proof, rather than between $\harm{f}(R)$ and $\hat{f}(R)$. We start with presenting an alternate variational formula for the harmonic mean transform:
\begin{lemma}[Variational Formula]
For all $f: \B^n \to \mathbb{R}_{\geq 0}$, $R \subseteq [n]$,
\[
\harm{f}(R) = \inf_{\substack{w: \B^R \to \mathbb{R}_{\geq 0} \\ \mathbb{E}_{z} w(z) = 1}} \mathbb{E}_{z}\bigl[f_R(z) w(z)^2\bigr]
\]
where each expectation over $z$ is uniform on $\B^R$.
\end{lemma}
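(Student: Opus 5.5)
This is a pointwise Cauchy--Schwarz (equivalently, weighted AM--HM) identity, applied on the finite cube $\B^R$ with the uniform measure. Write $a(z) = f_R(z) \geq 0$, so that $\harm{f}(R) = (\mathbb{E}_z a(z)^{-1})^{-1}$. I will prove the two inequalities separately: a lower bound on $\mathbb{E}_z[a w^2]$ valid for every admissible $w$, and an explicit $w$ attaining it. The degenerate case where some $a(z)=0$ is handled at the end.

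For the lower bound, assume first that $a > 0$ everywhere. Take any $w \geq 0$ with $\mathbb{E}_z w = 1$. By Cauchy--Schwarz,
\[
1 = \bigl(\mathbb{E}_z w\bigr)^2 = \Bigl(\mathbb{E}_z \bigl[\sqrt{a}\,w \cdot a^{-1/2}\bigr]\Bigr)^2 \leq \mathbb{E}_z[a w^2]\cdot \mathbb{E}_z[a^{-1}].
\]
Rearranging gives $\mathbb{E}_z[a w^2] \geq (\mathbb{E}_z a^{-1})^{-1} = \harm{f}(R)$, so the infimum is at least $\harm{f}(R)$.

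For attainment, equality in Cauchy--Schwarz requires $\sqrt{a}\,w \propto a^{-1/2}$, i.e.\ $w \propto a^{-1}$. I will set
\[
w^*(z) = \frac{a(z)^{-1}}{\mathbb{E}_{z'} a(z')^{-1}}.
\]
This is nonnegative and has mean $1$. A direct computation gives
\[
\mathbb{E}_z[a (w^*)^2] = \frac{\mathbb{E}_z a^{-1}}{(\mathbb{E}_z a^{-1})^2} = \harm{f}(R),
\]
so the infimum equals $\harm{f}(R)$ and is in fact a minimum when $a>0$ everywhere.

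If $a(z_0)=0$ for some $z_0$, the convention $0^{-1}=+\infty$ gives $\harm{f}(R)=0$. In this case take $w = 2^{|R|}\mathbbm{1}_{z=z_0}$, which is nonnegative with mean $1$. Then $\mathbb{E}_z[a w^2] = 2^{|R|}a(z_0) = 0$, and since the infimum is trivially $\geq 0$, it equals $0 = \harm{f}(R)$. The case $R = \emptyset$ needs no separate treatment: $\B^\emptyset$ is a single point, and both sides equal $f_\emptyset = \mathbb{E} f$, or $0$ if that mean vanishes.

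None of the steps is a real obstacle. The only care needed is the zero case, where the extended-real conventions in the definition of $\harm{f}$ must be matched by an explicit $w$ concentrated on a zero of $f_R$.
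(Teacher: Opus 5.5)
Your proof is correct and follows the paper's argument: Cauchy--Schwarz for the lower bound, and the minimizer $w \propto f_R^{-1}$ (the paper writes it as $w = \harm{f}(R)\, f_R^{-1}$) for attainment. Your explicit point-mass witness at a zero $z_0$ of $f_R$ fills in the degenerate case, which the paper only asserts by saying ``both sides must be zero.''
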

\begin{proof}
We may assume $f_R(z) > 0$ for all $z \in \B^R$, otherwise both sides of the identity must be zero. It is easy to see that the right hand side is at most the left by taking $w(z) = \harm{f}(R) f_R(z)^{-1}$. For the interesting direction, take any admissible $w$ and apply Cauchy-Schwarz:
\[
1 = (\mathbb{E}_z w(z))^2 = (\mathbb{E}_z (\sqrt{f_R(z)}w(z)) f_R(z)^{-1/2})^2 \leq \frac{\mathbb{E}_z[f_R(z)w(z)^2]}{\harm{f}(R)}
\]
\end{proof}
Three useful corollaries follow immediately:
\begin{corollary}\label{cor:harm-properties}
For all $f,g,h: \B^n \to \mathbb{R}_{\geq 0}$ we have:
\begin{enumerate}
  \item The map $f \mapsto \harm{f}$ is concave: if $f = \lambda g + (1-\lambda)h$ for some $\lambda \in [0,1]$, then $\harm{f} \geq \lambda \harm{g} + (1-\lambda)\harm{h}$ pointwise.
  \item $\harm{f}(R) \geq \harm{f}(S)$ whenever $R \subseteq S$.
  \item $\harm{f}(R)\leq \mathbb{E} f = \harm{f}(\emptyset)$.
\end{enumerate}
\end{corollary}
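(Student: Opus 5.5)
All three properties will be read off directly from the Variational Formula, which writes $\harm{f}(R)$ as an infimum, over admissible weights $w$, of quantities $\mathbb{E}_z[f_R(z)w(z)^2]$ that are linear in $f$.

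\emph{Concavity (item 1).} The averaging map $f\mapsto f_R$ is linear, so for $f=\lambda g+(1-\lambda)h$ we have $f_R=\lambda g_R+(1-\lambda)h_R$. Fix any admissible $w$ on $\B^R$. Then
\[
\mathbb{E}_z[f_R w^2]=\lambda\mathbb{E}_z[g_R w^2]+(1-\lambda)\mathbb{E}_z[h_R w^2].
\]
The right-hand side is at least $\lambda\harm{g}(R)+(1-\lambda)\harm{h}(R)$, because each term is bounded below by the corresponding infimum. Taking the infimum over $w$ on the left gives the claim. This is the usual fact that an infimum of linear functionals is concave. The conventions $0^{-1}=+\infty$ cause no trouble, since the Variational Formula already holds in the degenerate case where both sides are $0$.

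\emph{Monotonicity (item 2).} Let $R\subseteq S$. We show that every value achieved in the infimum for $R$ is also achieved in the infimum for $S$. Given an admissible $w$ on $\B^R$, define $w'$ on $\B^S$ by $w'(y)=w(y_R)$. Then $\mathbb{E}_y w'(y)=\mathbb{E}_z w(z)=1$. By the tower property, $f_S$ averaged over $y$ with $y_R=z$ equals $f_R(z)$, so
\[
\mathbb{E}_y[f_S(y)w'(y)^2]=\mathbb{E}_z\bigl[w(z)^2\,\mathbb{E}[f_S(y)\mid y_R=z]\bigr]=\mathbb{E}_z[f_R(z)w(z)^2].
\]
Hence $\harm{f}(S)\le \mathbb{E}_z[f_R w^2]$ for every admissible $w$, and therefore $\harm{f}(S)\le\harm{f}(R)$. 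The only point needing any care is checking the tower-property identity for $f_S$ versus $f_R$, and that is immediate from the definitions as conditional expectations under the uniform measure.

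\emph{Item 3.} When $R=\emptyset$, the only admissible $w$ is the constant $1$ and $f_\emptyset=\mathbb{E}f$, so $\harm{f}(\emptyset)=\mathbb{E}f$. Applying item 2 with $\emptyset\subseteq R$ gives $\harm{f}(R)\le\harm{f}(\emptyset)=\mathbb{E}f$. Alternatively, plugging $w\equiv1$ into the Variational Formula gives $\harm{f}(R)\le\mathbb{E}_z f_R(z)=\mathbb{E}f$ directly.
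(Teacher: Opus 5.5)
Your proposal is correct and follows the paper's own argument: concavity because the variational formula is an infimum of functionals linear in $f$, monotonicity by extending an admissible $w$ on $\B^R$ to $w'(y)=w(y_R)$ on $\B^S$, and item 3 from item 2. You also spell out two points the paper leaves implicit, the tower identity $\mathbb{E}_y[f_S(y)\mid y_R=z]=f_R(z)$ and the evaluation at $R=\emptyset$ giving $\mathbb{E}f$, and both are right.
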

\begin{proof}
The first follows since the variational formula expresses $f \mapsto \harm{f}(R)$ as an infimum over linear functions of $f$. The second follows since any admissible $w: \B^R \to \mathbb{R}_{\geq 0}$ in the variational form for $\harm{f}(R)$ extends to an admissible $w': \B^S \to \mathbb{R}_{\geq 0}$ in the variational form for $\harm{f}(S)$ achieving the same value: take $w'(z) = w(z_R)$. The third is immediate from the second.
\end{proof}

We now arrive at the central analytic inequality we use for the harmonic mean transform:
\begin{lemma}\label{lem:analytic}
For any $f: \B^n \to \mathbb{R}_{\geq 0}$,
\[
\mathbb{E}_{R \subseteq_{1/4} [n]} \sqrt{\harm{f}(R)} \geq \mathbb{E}_{x \sim \B^n} \sqrt{f(x)}
\]

\end{lemma}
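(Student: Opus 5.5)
The plan is to induct on $n$, peeling off one coordinate at a time, so that everything reduces to a two-point inequality. By the symmetry of the statement, single out coordinate $n$. Write $f^0,f^1:\B^{n-1}\to\mathbb{R}_{\geq 0}$ for the restrictions of $f$ with $x_n=0$ and $x_n=1$. Sample $R\subseteq_{1/4}[n]$ as $R'\subseteq_{1/4}[n-1]$ together with an independent decision whether to add $n$.

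The first step is to express $\harm{f}(R)$ in terms of $A:=\harm{f^0}(R')$ and $B:=\harm{f^1}(R')$. There are two cases.
\begin{itemize}
\item If $n\in R$, so $R=R'\cup\{n\}$, then $f_R(z',b)=f^b_{R'}(z')$. Averaging $f_R^{-1}$ over the last bit gives
\[
\harm{f}(R)=\Bigl(\tfrac12(A^{-1}+B^{-1})\Bigr)^{-1}=\frac{2AB}{A+B},
\]
the harmonic mean of $A$ and $B$, with the conventions $0^{-1}=+\infty$ etc.
\item If $n\notin R$, so $R=R'$, then $f_{R'}$ equals $\bigl(\tfrac{f^0+f^1}{2}\bigr)_{R'}$, where the latter is computed on $\B^{n-1}$. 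Concavity (Corollary~\ref{cor:harm-properties}, item 1) then gives $\harm{f}(R')\geq (A+B)/2$.
\end{itemize}

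Hence
\[
\mathbb{E}_R\sqrt{\harm{f}(R)}\;\geq\;\mathbb{E}_{R'}\Bigl[\tfrac34\sqrt{\tfrac{A+B}{2}}+\tfrac14\sqrt{\tfrac{2AB}{A+B}}\Bigr].
\]
If the bracket is at least $\tfrac12(\sqrt A+\sqrt B)$ pointwise, then the induction hypothesis applied to $f^0$ and $f^1$ finishes the proof:
\[
\mathbb{E}_{R'}\tfrac12\Bigl(\sqrt{\harm{f^0}(R')}+\sqrt{\harm{f^1}(R')}\Bigr)\geq \tfrac12\bigl(\mathbb{E}\sqrt{f^0}+\mathbb{E}\sqrt{f^1}\bigr)=\mathbb{E}\sqrt{f}.
\]
The base case $n=0$ is trivial, since then $\harm{f}(\emptyset)=f$.

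The crux, and the step I expect to be the main obstacle, is the two-point inequality
\[
\tfrac34\sqrt{\tfrac{A+B}{2}}+\tfrac14\sqrt{\tfrac{2AB}{A+B}}\;\geq\;\tfrac12\bigl(\sqrt A+\sqrt B\bigr),\qquad A,B\geq 0.
\]
This is exactly the $n=1$ case of the lemma, and the value $1/4$ for the sampling probability is what makes it hold. To prove it, note both sides are homogeneous of degree $1/2$, and the case $A=B=0$ is trivial. So normalize $A+B=2$, write $A=1+s$, $B=1-s$ with $s\in[-1,1]$, and set $u=\sqrt{1-s^2}\in[0,1]$.
\begin{itemize}
\item The left side becomes $\tfrac34+\tfrac14 u=\tfrac{3+u}{4}$.
\item The square of the right side is $\tfrac14\bigl(2+2\sqrt{1-s^2}\bigr)=\tfrac{1+u}{2}$.
\end{itemize}
Both sides are nonnegative, so it suffices to compare squares. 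The inequality $(3+u)^2\geq 8(1+u)$ is equivalent to $(1-u)^2\geq 0$, which always holds.

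The remaining care is with degenerate values. When $AB=0$ the harmonic mean is $0$, and the inequality still holds by the computation above with $u=0$. All the identities in the first step hold under the stated conventions $0^{-1}=+\infty$, $(+\infty)^{-1}=0$.
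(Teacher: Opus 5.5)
Your proposal is correct and follows the paper's proof essentially step for step. Both arguments induct on $n$ by splitting off the last coordinate, apply concavity when $n\notin R$, use the harmonic-mean identity when $n\in R$, and close with the two-point ($n=1$) inequality. The differences are cosmetic: you take the trivial $n=0$ base case, and you check the two-point inequality by normalizing $A+B=2$ and reducing to $(1-u)^2\geq 0$, whereas the paper writes the difference as $(c-d)^2/(2c)$.
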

\begin{proof}
We prove by induction on $n$. For $n=1$, let $a = f(0), b = f(1)$. In this case
\[
\harm{f}(\emptyset) = \mathbb{E}f = \frac{a + b}{2}, \quad \harm{f}(\{1\}) = \frac{2ab}{a+b}
\]
so we aim to prove:
\[
\frac{3}{4} \sqrt{\frac{a+b}{2}} + \frac{1}{4} \sqrt{\frac{2ab}{a+b}} \geq \frac{\sqrt{a} + \sqrt{b}}{2}
\]
with the convention here and below that $\frac{2ab}{a+b}=0$ when $a = b=0$. The case $a=b=0$ is immediate, so we assume $a+b > 0$. To verify the inequality, set $u = \sqrt{a},v = \sqrt{b}$, $c = \sqrt{(u^2 + v^2)/2}$, $d = (u+v)/2$, and rewrite it as:
\[
\frac{3}{4} c  + \frac{uv}{4c} \geq d
\]  
Since $uv = 2d^2 - c^2$, $c > 0$, we have:
\[
\frac{3}{4}c + \frac{uv}{4c} - d = \frac{3}{4}c + \frac{2d^2-c^2}{4c} - d = \frac{(c-d)^2}{2c} \geq 0
\]
For the inductive step, let $n\geq2$ and define
\[
f_0(x)=f(x,0),\qquad f_1(x)=f(x,1),
\qquad f_{\mathrm{avg}}=\frac{f_0+f_1}{2}.
\]
Observe that $\harm{f}_{\mathrm{avg}} \geq \frac{1}{2}(\harm{f}_0 + \harm{f}_1)$ pointwise by concavity. For every $S\subseteq[n-1]$ we have:
\[
\harm{f}(S)=\harm{f}_{\mathrm{avg}}(S),
\qquad
\harm{f}(S\cup\{n\})
=\frac{2\harm{f}_0(S)\harm{f}_1(S)}
{\harm{f}_0(S)+\harm{f}_1(S)},
\]
where the fraction on the right is interpreted as zero when its denominator is zero. Sampling $S\subseteq_{1/4}[n-1]$ and conditioning on whether $n\in R$, we obtain
\[
\begin{aligned}
\mathbb E_{R\subseteq_{1/4}[n]}\sqrt{\harm{f}(R)}
&=\mathbb E_S\left[
\frac34\sqrt{\harm{f}_{\mathrm{avg}}(S)}
+\frac14\sqrt{
\frac{2\harm{f}_0(S)\harm{f}_1(S)}
{\harm{f}_0(S)+\harm{f}_1(S)}
}
\right]\\
&\geq\mathbb E_S\left[
\frac34\sqrt{\frac{\harm{f}_0(S)+\harm{f}_1(S)}2}
+\frac14\sqrt{
\frac{2\harm{f}_0(S)\harm{f}_1(S)}
{\harm{f}_0(S)+\harm{f}_1(S)}
}
\right]\\
&\geq\frac12\mathbb E_S\sqrt{\harm{f}_0(S)}
+\frac12\mathbb E_S\sqrt{\harm{f}_1(S)}\\
&\geq\frac12\mathbb E_{x\sim\B^{n-1}}\sqrt{f_0(x)}
+\frac12\mathbb E_{x\sim\B^{n-1}}\sqrt{f_1(x)}\\
&=\mathbb E_{x\sim\B^n}\sqrt{f(x)}.
\end{aligned}
\]
where the first inequality uses concavity, the second applies the $n=1$ case, and the third applies the inductive hypothesis.
\end{proof}

To apply this inequality to study a sparser random set $R \subseteq_r [n]$, we apply an altered form of the same standard transfer principle for downward closed families used in the proof of the shattering lemma \cite[Lemma~4.3.7]{Zhao2024}: for any nonempty downward-closed family of sets $\mathcal{A} \subseteq 2^{[n]}$ and $0\leq r\leq 1/20$,
\[
\Pr_{R \subseteq_r [n]}[ R \in \mathcal{A}] \geq \left(\Pr_{S\subseteq_{1/4}[n]}[S\in\mathcal{A}]\right)^{5r}.
\]
The proof is exactly the same as the case we presented during the proof of the shattering lemma (in that case we used $1/2$ in place of $1/4$). At this point we are ready to prove Lemma~\ref{lem:improved-light-patterns}:

\begin{proof}[Proof of Lemma~\ref{lem:improved-light-patterns}]
As per the preceding discussion, it suffices to establish that, with probability $\geq \frac{63}{64}$ over $R \subseteq_r [n]$, we have $\harm{f}(R) \geq 2^{-k-2}$, where $f$ is the renormalized distribution function of $\bm{X}$. Let $\mathcal{A} \subseteq 2^{[n]}$ be the family of sets $R$ such that $\harm{f}(R) \geq 2^{-k-2}$. By the second point in Corollary~\ref{cor:harm-properties}, we know that $\mathcal{A}$ is a downward-closed family. We will prove that $\Pr_{R \subseteq_{1/4}}[ R \in \mathcal{A}] \geq 2^{-k/2-1}$. Applying the above transfer principle and the standard estimate $1- 2^{-t} \leq t$ valid for all $t \geq 0$ we have
\[
\Pr_{R \subseteq_r [n]}[ R \notin \mathcal{A}] \leq 1- (2^{-k/2-1})^{5r} \leq 5r(k/2+1) \leq \frac{1}{64}
\]
where the final inequality simply applies our assumptions $r \leq (512 k)^{-1}, k \geq 1$. It remains only to prove  $\Pr_{R \subseteq_{1/4}}[ R \in \mathcal{A}] \geq 2^{-k/2-1}$. Applying Lemma~\ref{lem:analytic}:
\[
\mathbb{E}_{R \subseteq_{1/4} [n]} \sqrt{\harm{f}(R)} \geq \mathbb{E}_{x \sim \B^n} \sqrt{f(x)} \geq 2^{- D_\infty(\bm{X})/2} \mathbb{E}_x f(x) \geq 2^{-k/2}
\]
Hence, by Markov,
\[
\Pr_{R \subseteq_{1/4}[n]} [\sqrt{\harm{f}(R)} < \frac{1}{2} 2^{-k/2}] \leq 1- 2^{-k/2 - 1} 
\]
using the fact (point 3 of Corollary~\ref{cor:harm-properties}) that $\sqrt{\harm{f}(R)} \leq 1 = \mathbb{E}f$ for all $R$.
\end{proof}

\section*{Acknowledgments}
The author thanks Mika G\"o\"os, Toniann Pitassi, Artur Riazanov, and Avi Wigderson for their comments on an initial draft of this manuscript.

\renewcommand{\refname}{References }


\begin{thebibliography}{dRMNPRV20}
\bibitem[Paj85]{Pajor1985}
Alain Pajor.
\emph{Sous-espaces $\ell_1^n$ des espaces de Banach}.
Travaux en cours, Hermann, Paris, 1985.

\bibitem[Sau72]{Sauer1972}
Norbert Sauer.
\emph{On the Density of Families of Sets}.
Journal of Combinatorial Theory, Series A, 13(1):145--147, 1972.
\url{https://doi.org/10.1016/0097-3165(72)90019-2}.

\bibitem[She72]{Shelah1972}
Saharon Shelah.
\emph{A Combinatorial Problem; Stability and Order for Models and Theories in Infinitary Languages}.
Pacific Journal of Mathematics, 41(1):247--261, 1972.
\url{https://doi.org/10.2140/pjm.1972.41.247}.

\bibitem[Zha24]{Zhao2024}
Yufei Zhao.
\emph{Probabilistic Methods in Combinatorics}.
MIT 18.226 lecture notes, Fall 2022. Updated June 18, 2024.
\url{https://yufeizhao.com/pm/probmethod_notes.pdf}.

\bibitem[GRSS24]{GRSS2024}
Mika G\"o\"os, Artur Riazanov, Anastasia Sofronova, and Dmitry Sokolov.
\emph{Top-Down Lower Bounds for Depth-Four Circuits}.
arXiv:2304.02555v2, 2024.
\url{https://arxiv.org/abs/2304.02555}.

\bibitem[KW90]{KW1990}
Mauricio Karchmer and Avi Wigderson.
\emph{Monotone Circuits for Connectivity Require Super-Logarithmic Depth}.
SIAM Journal on Discrete Mathematics, 3(2):255--265, 1990.
\url{https://doi.org/10.1137/0403021}.

\bibitem[KRW95]{KRW1995}
Mauricio Karchmer, Ran Raz, and Avi Wigderson.
\emph{Super-Logarithmic Depth Lower Bounds via the Direct Sum in Communication Complexity}.
Computational Complexity, 5(3--4):191--204, 1995.
\url{https://doi.org/10.1007/BF01206317}.

\bibitem[HJP95]{HJP1995}
Johan H{\aa}stad, Stasys Jukna, and Pavel Pudl{\'a}k.
\emph{Top-Down Lower Bounds for Depth-Three Circuits}.
Computational Complexity, 5(2):99--112, 1995.
\url{https://doi.org/10.1007/BF01268140}.

\bibitem[PPZ99]{PPZ1999}
Ramamohan Paturi, Pavel Pudl{\'a}k, and Francis Zane.
\emph{Satisfiability Coding Lemma}.
Chicago Journal of Theoretical Computer Science, 1999, Article 11, pages 1--19.
\url{https://doi.org/10.4086/cjtcs.1999.011}.

\bibitem[MW19]{MW2019}
Or Meir and Avi Wigderson.
\emph{Prediction from Partial Information and Hindsight, with Application to Circuit Lower Bounds}.
Computational Complexity, 28(2):145--183, 2019.
\url{https://doi.org/10.1007/s00037-019-00177-4}.

\bibitem[FSS84]{FSS1984}
Merrick Furst, James B. Saxe, and Michael Sipser.
\emph{Parity, Circuits, and the Polynomial-Time Hierarchy}.
Mathematical Systems Theory, 17:13--27, 1984.
\url{https://doi.org/10.1007/BF01744431}.

\bibitem[Ajt83]{Ajtai1983}
Mikl{\'o}s Ajtai.
\emph{$\Sigma^1_1$-Formulae on Finite Structures}.
Annals of Pure and Applied Logic, 24(1):1--48, 1983.
\url{https://doi.org/10.1016/0168-0072(83)90038-6}.

\bibitem[Yao85]{Yao1985}
Andrew Chi-Chih Yao.
\emph{Separating the Polynomial-Time Hierarchy by Oracles}.
In Proceedings of the 26th Annual Symposium on Foundations of Computer Science (FOCS), pages 1--10, 1985.
\url{https://doi.org/10.1109/SFCS.1985.49}.

\bibitem[Has86]{Hastad1986}
Johan H{\aa}stad.
\emph{Almost Optimal Lower Bounds for Small Depth Circuits}.
In Proceedings of the 18th Annual ACM Symposium on Theory of Computing (STOC), pages 6--20, 1986.
\url{https://doi.org/10.1145/12130.12132}.

\bibitem[Raz87]{Razborov1987}
Alexander A. Razborov.
\emph{Lower Bounds on the Size of Bounded Depth Circuits over a Complete Basis with Logical Addition}.
Mathematical Notes of the Academy of Sciences of the USSR, 41(4):333--338, 1987.
\url{https://doi.org/10.1007/BF01137685}.

\bibitem[Smo87]{Smolensky1987}
Roman Smolensky.
\emph{Algebraic Methods in the Theory of Lower Bounds for Boolean Circuit Complexity}.
In Proceedings of the 19th Annual ACM Symposium on Theory of Computing (STOC), pages 77--82, 1987.
\url{https://doi.org/10.1145/28395.28404}.

\bibitem[Bon70]{Bonami1970}
Aline Bonami.
\emph{{\'E}tude des coefficients de Fourier des fonctions de $L^p(G)$}.
Annales de l'Institut Fourier, 20(2):335--402, 1970.
\url{https://doi.org/10.5802/aif.357}.

\bibitem[O'D14]{ODonnell2014}
Ryan O'Donnell.
\emph{Analysis of Boolean Functions}.
Cambridge University Press, 2014.
\url{https://doi.org/10.1017/CBO9781139814782}.

\bibitem[FKNP21]{FKNP2021}
Keith Frankston, Jeff Kahn, Bhargav Narayanan, and Jinyoung Park.
\emph{Thresholds versus Fractional Expectation-Thresholds}.
Annals of Mathematics, 194(2):475--495, 2021.
\url{https://doi.org/10.4007/annals.2021.194.2.2}.

\bibitem[Wu18]{Wu2018}
Xinyu Wu.
\emph{The uniform marginals lemma in [GPW17]}.
Expository note, 2018.
\url{https://www.contrib.andrew.cmu.edu/~xinyuw1/papers/uniform-marginals-lemma.pdf}.

\bibitem[GPW20]{GPW2020}
Mika G\"o\"os, Toniann Pitassi, and Thomas Watson.
\emph{Query-to-Communication Lifting for BPP}.
SIAM Journal on Computing, 2020. Preliminary version in FOCS 2017.
\url{https://courses.cs.washington.edu/courses/cse599i/24sp/papers/GoosPitassiWatson_2020_journal.pdf}.

\bibitem[Tal10]{Talagrand2010}
Michel Talagrand.
\emph{Are Many Small Sets Explicitly Small?}
In Proceedings of the 42nd ACM Symposium on Theory of Computing (STOC), pages 13--36, 2010.
\url{https://doi.org/10.1145/1806689.1806693}.

\bibitem[ALWZ21]{ALWZ2021}
Ryan Alweiss, Shachar Lovett, Kewen Wu, and Jiapeng Zhang.
\emph{Improved Bounds for the Sunflower Lemma}.
Annals of Mathematics, 194(3):795--815, 2021.
\url{https://doi.org/10.4007/annals.2021.194.3.5}.

\bibitem[LMMPZ22]{LMMPZ2022}
Shachar Lovett, Raghu Meka, Ian Mertz, Toniann Pitassi, and Jiapeng Zhang.
\emph{Lifting with Sunflowers}.
In 13th Innovations in Theoretical Computer Science Conference (ITCS), LIPIcs 215, pages 104:1--104:24, 2022.
\url{https://doi.org/10.4230/LIPIcs.ITCS.2022.104}.

\bibitem[KN97]{KN1997}
Eyal Kushilevitz and Noam Nisan.
\emph{Communication Complexity}.
Cambridge University Press, 1997.
\url{https://doi.org/10.1017/CBO9780511574948}.

\bibitem[RM99]{RM1999}
Ran Raz and Pierre McKenzie.
\emph{Separation of the Monotone NC Hierarchy}.
Combinatorica, 19(3):403--435, 1999. Preliminary version in FOCS 1997.
\url{https://doi.org/10.1007/s004930050062}.

\bibitem[GPW18]{GPW2018}
Mika G\"o\"os, Toniann Pitassi, and Thomas Watson.
\emph{Deterministic Communication vs. Partition Number}.
SIAM Journal on Computing, 47(6):2435--2450, 2018. Preliminary version in FOCS 2015.
\url{https://doi.org/10.1137/16M1059369}.

\bibitem[GP18]{GP2018}
Mika G\"o\"os and Toniann Pitassi.
\emph{Communication Lower Bounds via Critical Block Sensitivity}.
SIAM Journal on Computing, 47(5):1778--1806, 2018. Preliminary version in STOC 2014.
\url{https://doi.org/10.1137/16M1082007}.

\bibitem[dRNV16]{dRNV2016}
Susanna F. de Rezende, Jakob Nordstr\"om, and Marc Vinyals.
\emph{How Limited Interaction Hinders Real Communication (and What It Means for Proof and Circuit Complexity)}.
In Proceedings of the 57th Annual IEEE Symposium on Foundations of Computer Science (FOCS), pages 295--304, 2016.
\url{https://doi.org/10.1109/FOCS.2016.40}.

\bibitem[PR17]{PR2017}
Toniann Pitassi and Robert Robere.
\emph{Strongly Exponential Lower Bounds for Monotone Computation}.
In Proceedings of the 49th Annual ACM SIGACT Symposium on Theory of Computing (STOC), pages 1246--1255, 2017.
\url{https://doi.org/10.1145/3055399.3055478}.

\bibitem[GGKS20]{GGKS2020}
Ankit Garg, Mika G\"o\"os, Pritish Kamath, and Dmitry Sokolov.
\emph{Monotone Circuit Lower Bounds from Resolution}.
Theory of Computing, 16(13):1--30, 2020. Preliminary version in STOC 2018.
\url{https://doi.org/10.4086/toc.2020.v016a013}.

\bibitem[GKRS19]{GKRS2019}
Mika G\"o\"os, Pritish Kamath, Robert Robere, and Dmitry Sokolov.
\emph{Adventures in Monotone Complexity and TFNP}.
In 10th Innovations in Theoretical Computer Science Conference (ITCS), LIPIcs 124, pages 38:1--38:19, 2019.
\url{https://doi.org/10.4230/LIPIcs.ITCS.2019.38}.

\bibitem[dRMNPRV20]{dRMNPRV2020}
Susanna F. de Rezende, Or Meir, Jakob Nordstr\"om, Toniann Pitassi, Robert Robere, and Marc Vinyals.
\emph{Lifting with Simple Gadgets and Applications to Circuit and Proof Complexity}.
In Proceedings of the 61st Annual IEEE Symposium on Foundations of Computer Science (FOCS), pages 24--30, 2020.
\url{https://doi.org/10.1109/FOCS46700.2020.00011}.

\bibitem[dRV25]{dRV2025}
Susanna F. de Rezende and Marc Vinyals.
\emph{Lifting with Colourful Sunflowers}.
In 40th Computational Complexity Conference (CCC), LIPIcs 339, pages 36:1--36:19, 2025.
\url{https://doi.org/10.4230/LIPIcs.CCC.2025.36}.

\bibitem[KR26]{KR2026}
G\"ulce Karde\c{s} and Benjamin Rossman.
\emph{On Top-Down and Local Lower Bounds for $\mathrm{AC}^0$ Circuits}.
arXiv:2609.01759, 2026.
\url{https://arxiv.org/abs/2609.01759}.
\bibitem[Sip84]{Sipser1984}
Michael Sipser.
\emph{A Topological View of Some Problems in Complexity Theory}.
In Mathematical Foundations of Computer Science (MFCS), Lecture Notes in Computer Science 176, pages 567--572. Springer, 1984.
\url{https://doi.org/10.1007/BFb0030341}.

\end{thebibliography}
\end{document}